\PassOptionsToPackage{nameinlink}{cleveref}

\documentclass[a4paper,UKenglish,cleveref,autoref,thm-restate,nolineno]{socg-lipics-v2021}

\hideLIPIcs  

\usepackage{todonotes}
\usepackage{pifont}

\usepackage{amsmath}
\usepackage{bbm}

\usepackage{algorithm}
\usepackage[noend]{algpseudocode}

\usepackage[dvipsnames]{xcolor}

\newtheorem{problem}{Problem}
\newtheorem{fact}{Fact}

\crefname{remark}{remark}{remarks}
\Crefname{remark}{Remark}{Remarks}

\crefname{obs}{observation}{observations}
\Crefname{obs}{Observation}{Observations}
\crefname{fact}{fact}{facts}
\Crefname{fact}{Fact}{Facts}

\usepackage{cancel}
\usepackage[normalem]{ulem}

\title{A Tour of Locality Sensitive Filtering on the Sphere}

\author{Luca Becchetti}{\emph{Sapienza} University of Rome, Rome, Italy}{becchetti@diag.uniroma1.it}{https://orcid.org/0000-0002-4941-0532}{}

\author{Andrea Clementi}{\emph{Tor Vergata} University of Rome, Rome, Italy}{clementi@mat.uniroma2.it}{https://orcid.org/0000-0002-9521-2457}{}

\author{Luciano Gual\`a}{\emph{Tor Vergata} University of Rome, Rome, Italy}{guala@mat.uniroma2.it}{https://orcid.org/0000-0001-6976-5579}{}

\author{Emanuele Natale}{CNRS, Universit\'e C\^{o}te d'Azur, France}{emanuele.natale@univ-cotedazur.fr}{https://orcid.org/0000-0002-8755-3892}{}

\author{Luca {Pep\`e Sciarria}}{\emph{Tor Vergata} University of Rome, Rome, Italy}{luca.pepesciarria@gmail.com}{https://orcid.org/0000-0003-4432-6099}{}

\author{Alessandro Straziota}{\emph{Tor Vergata} University of Rome, Rome, Italy \and \url{https://alessandrostr95.github.io/alessandro-straziota/}}{alessandro.straziota@uniroma2.it}{https://orcid.org/0009-0008-4543-786X}{}

\authorrunning{L. Becchetti, A. Clementi, L. Gual\`a, E. Natale, L. Pep\`e~Sciarria, and A. Straziota} 

\Copyright{Luca Becchetti, Andrea Clementi, Luciano Gual\`a, Emanuele Natale, Luca Pep\`e~Sciarria, and Alessandro Straziota} 

\ccsdesc[500]{Theory of computation~Nearest neighbor algorithms}
\ccsdesc[300]{Theory of computation~Random projections and metric embeddings}

\keywords{Locality Sensitive Hashing; Locality Sensitive Filtering; Approximate Near Neighbor Search} 

\category{} 

\relatedversion{} 

\EventEditors{John Q. Open and Joan R. Access}
\EventNoEds{2}
\EventLongTitle{42nd Conference on Very Important Topics (CVIT 2016)}
\EventShortTitle{CVIT 2016}
\EventAcronym{CVIT}
\EventYear{2016}
\EventDate{December 24--27, 2016}
\EventLocation{Little Whinging, United Kingdom}
\EventLogo{}
\SeriesVolume{42}
\ArticleNo{23}

\begin{document}
\newcommand{\Prob}[2]{\mathbb{P}_{#1} \left( #2 \right)}
\newcommand{\Expec}[2]{\mathbb{E}_{#1} \left[ #2 \right]}
\newcommand{\Var}[2]{\mathrm{Var}_{#1} \left( #2 \right)}
\newcommand{\Cov}[2]{\mathrm{Cov}_{#1} \left( #2 \right)}
\newcommand{\Corr}[1]{\mathrm{Corr}\left( #1 \right)}
\newcommand{\skproof}{\noindent\textit{Sketch of Proof. }}
\newcommand{\ideaproof}{\noindent\textit{Idea of the proof. }}

\newcommand{\polylog}[1]{\mbox{polylog}\left(#1\right)}
\newcommand{\poly}{ {\mathrm{poly}}}
\newcommand{\real}{\mathbb{R}}
\newcommand{\bigO}{\mathbf{O}}

\newcommand{\eas}{Spherical Filtering}
\newcommand{\sfd}{\textsc{sf data structure}}
\newcommand{\sfq}{\textsc{sf query algorithm}}

\newcommand{\minhash}{MinHash}
\newcommand{\lsf}{LSF}
\newcommand{\lsh}{LSH}
\newcommand{\anns}{ANNS}
\newcommand{\buck}{B}   
\newcommand{\topk}{\text{top-k}}      

\newcommand{\andy}{\textbf{ANDY:} }
\newcommand{\ema}[1]{\textcolor{red}{E: #1}}
\newcommand{\lb}[1]{\textcolor{blue}{LB: #1}}

\newcommand{\intersez}{\iota}
\newcommand{\Intersez}{\mathcal I}

\newcommand{\dset}{\mathcal{X}} 
\newcommand{\dist}{\textit{dist}}   
\newcommand{\lshfam}{\mathcal{F}}  
\newcommand{\pmat}{A} 
\newcommand{\sgn}{\mathbf{sgn}} 
\newcommand{\rp}{\Theta} 
\newcommand{\hash}{h} 
\newcommand{\ghash}{\mathbf{h}} 
\newcommand{\sig}{\text{sig}} 
\newcommand{\thresh}{X} 
\newcommand{\ball}{\mathbf{B}} 
\newcommand{\scap}{C} 
\newcommand{\atp}{p} 
\newcommand{\jatp}{q} 
\newcommand{\event}{\mathcal{E}} 

\newcommand{\norm}{\mathcal{N}} 

\newcommand{\nlin}{f} 
\newcommand{\eve}{\mathcal{E}} 

\def\bzero{{\bf 0}}
\def\bone{\mathbf{1}}
\def\be{{\bf e}}
\def\bx{{\mathbf x}}
\def\by{{\mathbf y}}
\def\bs{{\mathbf s}}
\def\bz{{\mathbf z}}
\def\bw{{\mathbf w}}
\def\bu{{\mathbf u}}
\def\bv{{\mathbf v}}
\def\ba{{\mathbf a}}
\def\bq{{\mathbf q}}


\maketitle

\begin{abstract}
  The Approximate Near Neighbor (ANN) problem is a cornerstone of high-dimensional data analysis. While Locality Sensitive Hashing (LSH) has been the classical paradigm, recent work has investigated Locality Sensitive Filtering (LSF), which can afford greater expressivity by allowing asymmetric regions to independently control queries and data updates.
  In its full generality, however, this framework can obscure the essential algorithmic ideas beneath substantial technical complexity.

  In this work, we bridge classical \lsh\ and modern \lsf\ by providing a self-contained, streamlined treatment of angular distance on the unit sphere under symmetric Gaussian filters.
  Although specialized,
  this canonical setting -- monotonically equivalent to the widely used cosine similarity -- captures all key aspects of the problem, delivering a ``guided tour'' of the core mechanisms of locality-sensitive filtering and allowing us to expose all fundamental algorithmic and probabilistic ingredients, while retaining transparent notation and pedagogical clarity.

  
  
  More precisely, we develop a treatment of Spherical-\lsf\ that makes explicit the connection between the probabilistic behavior of Gaussian filters and the performance of the resulting data structure.
  Using elementary calculus, we derive sharp probability bounds and obtain a transparent proof of asymptotic optimality among schemes whose filter distribution is fixed independently of the input dataset.
  We further exploit the specific structure of Gaussian filters to obtain an efficient implementation while avoiding much of the technical machinery required by the general \lsf\ framework.

  Overall, our results offer a self-contained and accessible entry point to locality-sensitive filtering, while providing a unified view of its main probabilistic and algorithmic principles.
\end{abstract}

\section{Introduction}\label{se:intro}

In the \emph{Nearest Neighbour} problem, we are given a dataset $\dset$ of points from a metric space $(\mathcal{M}, \dist)$. The goal is to preprocess $\dset$ so as to efficiently answer queries of the form: given a point $\bq \in \mathcal{M}$, retrieve a point $\bx \in \dset$ that is among the closest to $\bq$.
Despite its seeming simplicity, Nearest Neighbour entails a fundamental primitive in a wide range of modern applications -- including machine learning, data mining, information retrieval, and computer vision -- in which data are naturally represented as vectors in a feature space. In this setting (often referred to as the \emph{vector space model}), similarity between objects is typically captured by a distance function, and identifying the most similar object reduces precisely to solving a nearest neighbor query. Canonical examples include similarity search in image databases, nearest-neighbor classification, and content recommendation.

This problem poses significant computational challenges when data volume grows to a scale. In this case, naive solutions with linear query-time and/or quadratic space are no longer feasible.
Moreover, known sublinear solutions exhibit an exponential dependence on the dimensionality of the data in the query time and/or space \cite{arya-jacm93,arya-jacm98,chanSCG97,clarkson,kleinberg-nn}, the well-known \emph{curse of dimensionality}.

These limitations motivate the study of approximate variants of the problem.
In a seminal work, Indyk and Motwani \cite{har2012approximate} introduced \emph{Locality-Sensitive Hashing} (\lsh), 
a randomized framework that allows to \emph{approximate} the Nearest Neighbour\footnote{Where we want to retrieve any data point whose distance is at most $c$ times the distance of the nearest.} with polynomial preprocessing time, sublinear query time, and subquadratic space.
This approach leverages a reduction to the \emph{$(r,cr)$-Approximate Near Neighbor} (or $(r,cr)$-ANN) problem in which, given a query $\bq$, we must return a point $\bx' \in \dset$ at distance at most $cr$ with constant probability, whenever a point $\bx$ at distance at most $r$ from $\bq$ exists in $\dset$.
The idea behind \lsh\ is to use hash functions that are \emph{sensitive} with respect to the metric, i.e., such that points that are close are more likely to share the same hash value.
More formally, we say that a family $\mathcal{H}$ of hash functions is $(r,cr, p_1, p_2)$-\emph{sensitive} for a metric space $(\mathcal{M}, \dist)$ if, for a random $h \in \mathcal{H}$, and for every $\bx,\by \in \mathcal{M}$, we have $\Prob{}{h(\bx) = h(\by)} \geq p_1$ if $\bx,\by$ are at distance at most $r$ (we say that $\bx$ and $\by$ are \emph{close} points), and $\Prob{}{h(\bx) = h(\by)} \leq p_2$ if $\bx,\by$ are at distance at least $cr$ (we say that $\bx$ and $\by$ are \emph{far} points).

Indyk and Motwani \cite{indyk98lsh} proved that a $(r,cr, p_1,p_2)$-sensitive hash family $\mathcal{H}$ for $(\mathcal{M}, \dist)$ affords a data structure that solves $(r,cr)$-ANN with constant error probability  for a query point $\bq$, in time $O(n^{\rho})$ and using $O(n^{1+\rho})$ space, where $\rho = \frac{\log\frac{1}{p_1}}{\log\frac{1}{p_2}}$.
This solution not only bypasses the curse of dimensionality, but it also achieves sublinear query time and subquadratic space whenever $p_1 > p_2$.
A vast literature has developed on \lsh, leading to the study of \lsh\ functions for various metrics, including Hamming \cite{har2012approximate}, Jaccard \cite{broder1997resemblance,broder2000identifying}, angular \cite{andoni2015practical,charikar2002similarity,sphericalLSH}, and Euclidean \cite{andoni2008near,andoni2014beyond,andoni2015optimal} distances.

Since $\rho$ characterizes the performance of the data structure, the problem essentially boils down to designing ever more ``sensitive'' hash functions.
In particular, the best sensitive families known so far yield $\rho \leq \frac{1}{c} + o_d(1)$ for Hamming and Jaccard distances, while $\rho \leq \frac{1}{c^2} + o_d(1)$ for Euclidean \cite{andoni2006near,andoni2008near} and angular distances \cite{andoni2015practical,sphericalLSH}\footnote{Here $o_d(1)$ denotes a function that, asymptotically in $d$, is smaller than any constant.}.
This implies that, as the required approximation loosens (i.e., parameter $c$ increases), increasingly efficient solutions are possible.
On the other hand, the upper bounds on $\rho$ above are essentially tight \cite{motwani2006lower,o2014optimal}.

The \lsh\ framework is not the only approach to the $(r,cr)$-ANN problem. A related, yet different approach is the more recent \textit{Locality Sensitive Filtering} (\lsf), introduced in \cite{becker2016new}. One of the main advantages of \lsf\ is that it potentially offers a more flexible and agile way to tune the behavior of the data structure and the underlying space-time trade-offs \cite{christiani2017framework}. Informally, a \textit{filter} $h$ is a boolean function, and we say that a data point $\bx$ passes the filter if $h(\bx) = 1$.
In the spirit of \lsh, in \lsf\ one designs a family of filters $\mathcal{F}$ that is \emph{sensitive} with respect to a desired distance, so that points that are close enough are more likely to pass the same filters than points that are farther apart. More formally, a family of filters $\mathcal{F}$ is $(r,cr,q_1,q_2)$-\textit{sensitive} for a metric space $(\mathcal{M}, \dist)$ if, for a randomly chosen filter $h \in \mathcal{F}$, and for every $\bx,\by \in \mathcal{M}$, we have  $\Prob{}{h(\bx) = 1 \mid h(\by) = 1} \geq q_1$ if $\bx$ and $\by$ are ``close'', and $\Prob{}{h(\bx) = 1 \mid h(\by) = 1} \leq q_2$ if $\bx$ and $\by$ are ``far apart''.
\lsh\ and \lsf\ seemingly offer different perspectives on local sensitivity: While in \lsh\ we are interested in the joint probability that two close points share the same hash value, in \lsf\ our focus is on the probability that two points pass the same filter.
In other words, while a hash function induces a partition of the input space in \lsh, a filter does not provide any additional information regarding points that did not pass a filter.
Similarly to \lsh, one can associate a parameter $\rho=\frac{\log(1/q_1)}{\log(1/q_2)}$ to an \lsf\ family with probabilities $q_1,q_2$, which measures the discriminating power of the filters and determines the performance of the corresponding data structure. 
Connections between \lsh\ and \lsf\ run deep and are reflected in a relationship between the corresponding versions of parameter $\rho$,  something we discuss in a later section.

In this work, we consider the ANN problem when $\mathcal{M}$ is the unit $d$-dimensional sphere $S^{d-1}$ equipped with the \emph{angular distance}, a variant we  refer to as the \emph{Angular ANN} problem.
This variant is highly relevant in practice, since angular distance is equivalent to  Euclidean distance on the sphere and thus cosine similarity, commonly adopted in applications including image search \cite{jegou2010}, speaker representations \cite{schmidt2014} and tf-idf datasets \cite{sundaram2013}. Moreover, all results for Angular ANN seamlessly extend to the entire Euclidean space
\cite{aumuller2019puffinn,christiani2017framework,rahimi2007random,valiant2015finding}.

\subparagraph*{Scope of the paper.}
The locality-sensitive filtering framework is considerably more general than the setting considered in this paper.
In its general form, it applies to arbitrary similarity or dissimilarity spaces and allows a filter to use different regions for query and update operations.
This asymmetry gives rise to two distinct exponents, $\rho_q$ and $\rho_u$, controlling query time and space complexity, respectively.

Our goal is not to redevelop this framework in its full generality. Instead, we deliberately focus on the canonical geometric setting of the unit sphere equipped with angular distance, and on symmetric filters, for which query and update regions coincide and the two exponents collapse to a single parameter $\rho$.
This special instance of \lsf\ allows us to isolate the central probabilistic and algorithmic ideas behind \lsf\ while keeping the notation and data structure transparent.
We return to asymmetric filters and the resulting space-time tradeoffs in \Cref{sec:asymmetric_lsf}.

\subsection{Our Contribution}\label{subse:contr}
In this paper, we give a self-contained and deliberately focused presentation of locality-sensitive filtering for Angular ANN.
Rather than introducing the \lsf\ framework in its full generality, we isolate the case of symmetric Gaussian filters on the unit sphere.
This setting is sufficiently rich to expose all fundamental ingredients of \lsf, while enabling a presentation and analysis that require minimal technical overhead.
We offer three main contributions:
\begin{description}
    \item[An elementary data structure:] We first introduce a basic Spherical-\lsf\ data structure in which every sampled Gaussian filter corresponds to a single bucket.
    On the positive side, this basic data structure presents a clean connection between filter and success probabilities, expected number of candidates and space complexity.
    Its direct implementation, however, may be inefficient, requiring the evaluation of too many filters at query time.
    To address this issue, we refine the elementary data structure above using the tensoring technique of Christiani \cite{christiani2017framework}, obtaining a more refined data structure that allows enumeration of all buckets relevant to a query in (optimal) expected time $O(n^\rho)$.

    \item[A simplified use of tensoring:] In the general \lsf\ framework \cite{christiani2017framework}, tensoring is combined with another technique, called powering, to obtain a required probability amplification, while at the same time affording efficient filter enumeration. For the Gaussian filters considered here, the threshold parameter itself acts as a probability amplifier. This allows us to completely do away with powering and to provide a relatively simple tensorized data structure and analysis.



    \item[A streamlined analysis of Gaussian filters:] A central technical ingredient in the literature on Angular (and Euclidean) ANN is bounding the probability that the correlated Gaussian projections resulting from application of the same filter to two different input points simultaneously exceed a given threshold.
    While a cornerstone for both angular \lsf\ and \lsh, previous solutions to this problem have been mathematically involved, asymptotic in nature, or incomplete. 
    For instance, the analysis of \cite{christiani2017framework} does not cover all angle ranges and relies on coarse geometric approximations (e.g., bounding the probability of a multivariate normal vector belonging to a closed convex body) that are only tight up to polynomial factors. 
    Similarly, Becker et al. \cite{becker2016new} circumvent the exact probabilistic analysis by relying on bounds for the volume of the intersection of spherical caps, 
    which requires additional non-trivial steps to relate back to Gaussian projectors. 
    In contrast, \Cref{le:cond_prob} offers a fully self-contained, elementary analysis of this problem using only basic calculus. 
    Our bounds are rigorous, hold for the full range of angles, and are slightly tighter than those in previous literature. 
    This direct approach completely does away with heavy asymptotic notation and complex geometric machinery used in previous work. 
    In turn, this simplification directly translates into a much more streamlined and transparent proof of the asymptotic
    optimality of Spherical LSF (\Cref{subse:slsf_optimal}).
\end{description}

\subparagraph{Paper organization.}
In \Cref{subse:related} we discuss related works and the main problematic of the \lsf\ literature.

\Cref{se:prelim} introduces the notation and the Angular ANN problem.
In \Cref{se:eas,se:analysis} we deliberately begin with the basic version of the Spherical-\lsf\ data structure.
This construction is useful pedagogically because it separates the probabilistic behavior of the filters from the algorithmic problem of enumerating them.
\Cref{se:analysis} analyzes the filter probabilities, proves the optimality of the resulting exponent $\rho$, and bounds the expected number of candidates and the space usage of the basic construction.

The analysis of the basic construction reveals that the number of filters that must be inspected may exceed the number of candidate points.
\Cref{se:tensoring} closes this gap by applying tensoring, thereby obtaining an efficient enumeration of the buckets associated with a query and the claimed end-to-end query complexity.
Finally, \Cref{sec:asymmetric_lsf} places our symmetric construction within the general asymmetric \lsf\ framework, explains the origin of query-space tradeoffs, and briefly discusses the stronger guarantees available in the data-dependent setting.

\subsection{Further Related Work}\label{subse:related}
Before discussing the \lsf\ literature, we briefly outline the major results on \lsh\ for the ANN problem.

\subparagraph{Locality-sensitive hashing.}
The seminal work of Indyk and Motwani \cite{indyk98lsh} introduced locality-sensitive hashing, studying \lsh\ families for Hamming distance, $\ell_1$ norm and Euclidean distance ($\ell_2$ norm).
Charikar~\cite{charikar2002similarity} subsequently introduced the \emph{random-hyperplane} \lsh\ (or \emph{SimHash}) scheme for angular distance, yielding an efficient estimator for cosine similarity.

Notable works on the Euclidean distance and angular distance are the following.
Datar et al. \cite{datar2004pstable} is the first work that provides a natural hash family for Euclidean distance without embedding it into the Hamming space, which achieves $\rho \approx 1/c$. 
Motwani et al. \cite{motwani2006lower} show the first lower bounds to the \lsh\ data structure under the $\ell_1$ norm. Later, Andoni and Indyk \cite{andoni2006near} introduce a scheme for the Euclidean distance achieving $\rho \approx 1/c^2$, significantly improving upon the previous $1/c$ bound. This result has been later shown to be optimal for data-independent LSH in the Euclidean setting \cite{chandrasekaran2018lattice,o2014optimal}.

Indeed, from the lower bound point of view the work of O'Donnell et al. \cite{o2014optimal} provides a lower bound for the Hamming distance in which $\rho \ge 1/c - o_d(1)$. By applying their results to the unit sphere or Euclidean space, they derive a lower bound for $\rho$ of $1/c^2$ in that setting.

Of particular interest is the work of Terasawa and Tanaka \cite{sphericalLSH} whose \lsh\ scheme, named \emph{cross-polytope}, is analyzed in \cite{andoni2015practical} and shown to be optimal. 
The cross-polytope \lsh\ has some resemblance to the \lsf\ scheme studied in this paper. Indeed, the main core of the cross-polytope hash function resides in the computation of dot-products between the points in the dataset and random Gaussian vectors sampled from $\norm\left(\bzero, I/d\right)$. The analysis provided here may also be applied to that particular case.
The cross-polytope scheme of \cite{andoni2015practical} has been used into practical implementations \cite{aumuller2019puffinn,pham2022falconn++}.
Finally, stronger guarantees can be obtained when the hashing scheme is allowed to depend on the input dataset. In fact, to overcome the limitations imposed by the lower bound $\rho \geq 1/c^2$, \emph{data-dependent} approaches are introduced and analyzed in \cite{andoni2014beyond,andoni2015optimal}, achieving a superior exponent of $\rho = 1/(2c^2 - 1)$ for Euclidean ANN.

\subparagraph{Locality-sensitive filtering.}
Locality-sensitive filtering was introduced by Becker et al. \cite{becker2016new} for approximate near-neighbor search on unit sphere, with applications to the Shortest Vector Problem.
In their model, filter evaluation is described through an oracle that, given a point, lists the filters containing it.
For the spherical construction developed in the same work, however, this oracle is explicitly instantiated: independently sampled filters are replaced by a structured family based on random product codes, and the relevant filters are enumerated through a corresponding list-decoding procedure.

Christiani~\cite{christiani2017framework} subsequently developed a more general data-indepedent \lsf\ framework.
The framework allows query and update operations to use different filter regions, giving rise to separate query and update exponents, and hence to space-time tradeoffs.
Moreover, it removes the oracle abstraction of \cite{becker2016new}, incorporating efficient filter enumeration into general construction through a combination of the so called \emph{powering and tensoring techniques}.
Shortly, powering amplifies the locality-sensitive properties of the original filter family, whereas tensoring generates a large, structured family of composite filters whose members containing a given point can be enumerated efficiently.

As for \lsh, a separate line of work considers data-dependent approaches, in which the indexing strategy may adapt to the input dataset.
Building on spherical filtering and data-dependent hashing, Andoni et al.~\cite{andoni2017optimal} established hashing-based
space-time tight tradeoffs for approximate near-neighbor search.
These results concern a more powerful model than the data-independent setting studied in the main body of this paper and it requires entirely different algorithmic and geometric machinery.
We discuss asymmetric filters and the distinction between data-independent and data-dependent schemes in \Cref{sec:asymmetric_lsf}.

Technical comparisons with Becker et al. \cite{becker2016new} and Christiani
\cite{christiani2017framework} are provided in \Cref{se:analysis}.

Beyond its theoretical development, locality-sensitive filtering has also inspired practical ANN methods, including FALCONN++~\cite{pham2022falconn++}.
Related locality-sensitive techniques have also found applications in the context of algorithmic fairness \cite{aumuller2022fairtods}, and differential privacy \cite{aumuller2025privatenn}.
More broadly, random-projection techniques, closely related to the Gaussian filters studied here, have been applied to clustering problems \cite{xu2024sdbscan,zheng2026robust}.



Despite these developments, locality-sensitive filtering receives only limited treatment in standard textbooks and surveys \cite{Bruch2024FoundationsOV,har-peled2011book,Leskobook20,prezza2024algorithmsmassivedata,toth2017handbook}.
For example, the survey of Andoni, Indyk, and Razenshteyn \cite{andoni2018survey} mentions the framework only briefely.

\section{Preliminaries}\label{se:prelim}

\subparagraph*{Notation.}
For any non-negative integer $m$, let $[m] = \{1,\ldots , m\}$. In general, we use bold symbols to represent vectors. In particular, we use $\be_i$ to denote the $i$-th canonical vector. We further let $S^{d-1} = \{\bx\in\real^d: \|\bx\| = 1\}$, i.e., the $d$-dimensional unit sphere. We use $I_d$ to denote the identity matrix in $d$ dimensions, omitting $d$ when clear from context.
In the remainder, $\norm(\bzero, I_d)$ denotes the multivariate, standard normal distribution, while we respectively use $\phi$ and $\Phi$ to denote the density and CDF of its marginals \cite{bishop2006pattern}, namely
\begin{equation*}
    \phi(x) = \frac{1}{\sqrt{2\pi}}e^{-\frac{1}{2}x^2}, \ \ \ \ \Phi(x) = \int_{-\infty}^x\phi(x) dx.
\end{equation*}
Finally, considered the set of functions that include a parameter $z$, we denote by $o_z(1)$ the subset of functions that tend to $0$ as $z\rightarrow\infty$.

\subparagraph*{Useful facts.}
The following well-known results will be extensively used in the remainder.
\begin{lemma}[Proposition 2.1.2 in \cite{vershynin2018high}]\label{le:normal_tail_1}
Assume $g \sim \mathcal{N}(0,1)$. Then, for every $t > 0$
    \[
    \frac{t}{t^2 + 1} \cdot \frac{1}{\sqrt{2\pi}} e^{-t^2/2} \leq \Prob{}{g \geq t} \leq \frac{1}{t} \cdot \frac{1}{\sqrt{2\pi}} e^{-t^2/2} .
    \]
\end{lemma}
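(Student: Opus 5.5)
The plan is to prove the two inequalities separately, working directly with the tail integral $\Prob{}{g \geq t} = \int_t^\infty \phi(x)\,dx$. The only tools needed are the identity $\phi'(x) = -x\phi(x)$ and the fact that $\phi(x) \to 0$ as $x \to \infty$.

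For the upper bound, on the domain of integration we have $x \geq t > 0$, so $1 \leq x/t$. Hence
\[
\int_t^\infty \phi(x)\,dx \leq \frac{1}{t}\int_t^\infty x\phi(x)\,dx = \frac{1}{t}\Bigl[-\phi(x)\Bigr]_t^\infty = \frac{\phi(t)}{t},
\]
which is exactly the right-hand inequality of \Cref{le:normal_tail_1}.

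For the lower bound, I would use a derivative-comparison argument. Define
\[
h(s) = \int_s^\infty \phi(x)\,dx - \frac{s}{s^2+1}\phi(s) \qquad \text{for } s \geq t > 0 .
\]
As $s \to \infty$ we have $h(s) \to 0$, since both terms vanish. Differentiating, and using $\phi'(s) = -s\phi(s)$, gives
\[
h'(s) = -\phi(s) - \Bigl(\frac{1-s^2}{(s^2+1)^2} - \frac{s^2}{s^2+1}\Bigr)\phi(s).
\]
Putting everything over the common denominator $(s^2+1)^2$, the numerator of the bracket becomes
\[
(1-s^2) - s^2(s^2+1) + (s^2+1)^2 = 2 .
\]
Therefore $h'(s) = -\dfrac{2\phi(s)}{(s^2+1)^2} < 0$. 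So $h$ is strictly decreasing and tends to $0$ at infinity, which forces $h(t) > 0$. This is precisely the left-hand inequality.

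The main obstacle, though a mild one, is the algebra in $h'(s)$: the terms must combine to the clean constant $2$ for the sign to be evident. If that algebra proved troublesome, a fallback is integration by parts. One has $\int_t^\infty \phi = \phi(t)/t - \int_t^\infty \phi(x)/x^2\,dx$, and bounding the last integral by $t^{-2}\int_t^\infty \phi$ gives $(1+t^{-2})\int_t^\infty \phi \geq \phi(t)/t$. Rearranging yields the same bound $\frac{t}{t^2+1}\phi(t)$.
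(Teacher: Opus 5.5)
Your proof is correct. The upper bound via $1 \le x/t$ on $[t,\infty)$ is sound. In the lower bound the numerator does collapse to $(1-s^2)-s^2(s^2+1)+(s^2+1)^2 = 2$, so in fact $h(t)=\int_t^\infty \frac{2\phi(s)}{(s^2+1)^2}\,ds>0$. The integration-by-parts fallback is also valid as written: $(1+t^{-2})\int_t^\infty\phi \ge \phi(t)/t$ rearranges to exactly $\frac{t}{t^2+1}\phi(t)$.

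The paper gives no proof of this lemma; it cites it from Vershynin's book. So there is no in-paper argument to compare your approach against. Both of your routes are standard and self-contained. The monotonicity argument has a small extra benefit: it gives the slack of the lower bound explicitly, as the positive integral above.
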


\begin{fact}\label{fa:expec}
Assume that $x\sim\norm(0, 1)$ and let $t > 0$. Then:
    \[
        \int_t^{\infty}(x - t)\phi(x)dx = \phi(t) - t(1 - \Phi(t)).
    \]
\end{fact}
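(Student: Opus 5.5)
The identity in \Cref{fa:expec} follows by splitting the integrand and using the fact that the Gaussian density is its own antiderivative up to a sign. Concretely, write
\[
\int_t^{\infty}(x - t)\phi(x)\,dx = \int_t^{\infty} x\,\phi(x)\,dx \;-\; t\int_t^{\infty}\phi(x)\,dx ,
\]
which is legitimate because both integrals converge absolutely: $x\phi(x)$ and $\phi(x)$ decay like $e^{-x^2/2}$ times a polynomial. The two pieces are then evaluated separately.

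\textbf{First integral.} The key observation is that $\phi'(x) = -x\,\phi(x)$, which follows by differentiating $\frac{1}{\sqrt{2\pi}}e^{-x^2/2}$. Hence $-\phi$ is an antiderivative of $x\phi(x)$, and
\[
\int_t^{\infty} x\,\phi(x)\,dx = \bigl[-\phi(x)\bigr]_t^{\infty} = \lim_{M\to\infty}\bigl(\phi(t)-\phi(M)\bigr) = \phi(t),
\]
since $\phi(M)\to 0$ as $M\to\infty$.

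\textbf{Second integral.} By definition of $\Phi$ and the normalization $\int_{-\infty}^{\infty}\phi = 1$, we have $\int_t^{\infty}\phi(x)\,dx = 1-\Phi(t)$.

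Combining the two pieces gives $\phi(t) - t\,(1-\Phi(t))$, as claimed. There is no real obstacle here; the only point needing care is justifying the improper limits, and the Gaussian decay handles this. The hypothesis $t>0$ is not actually needed for the identity, though it is the regime in which the fact is used later.
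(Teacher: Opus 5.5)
Your proof is correct and complete. Splitting off $t\int_t^\infty\phi$ and using $\phi'(x)=-x\,\phi(x)$ to evaluate $\int_t^\infty x\,\phi(x)\,dx=\phi(t)$ is the standard argument. The paper gives no proof of this fact (it lists it among the ``well-known results'' and proves only \Cref{fa:standard_ratio}), so your argument supplies the omitted routine verification. You are also right that $t>0$ is not needed for the identity itself.
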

The proof of the following  folklore fact is given in \Cref{se:apx_prel} for the sake of completeness.
\begin{fact}\label{fa:standard_ratio}
For any $0\le\eta_1\le\eta_2$ we have:
\begin{equation}\label{eq:standard_ratio}
    \frac{1 - \Phi(\eta_2)}{1 - \Phi(\eta_1)}\le e^{-\frac{\eta_2^2 - \eta_1^2}{2}}.
\end{equation}
\end{fact}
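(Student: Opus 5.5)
The plan is to show that the function $g(\eta) = \log(1-\Phi(\eta)) + \eta^2/2$ is non-increasing on $[0,\infty)$. Once this holds, $g(\eta_2)\le g(\eta_1)$ for $0\le\eta_1\le\eta_2$. Rearranged, this gives
\[
\log\frac{1-\Phi(\eta_2)}{1-\Phi(\eta_1)} \le -\frac{\eta_2^2-\eta_1^2}{2},
\]
and exponentiating yields \eqref{eq:standard_ratio}.

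Differentiating, with $\Phi'=\phi$, gives
\[
g'(\eta) = \eta - \frac{\phi(\eta)}{1-\Phi(\eta)}.
\]
So $g'(\eta)\le 0$ is equivalent to the Mills-ratio inequality $\eta\,(1-\Phi(\eta)) \le \phi(\eta)$ for $\eta\ge 0$.

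For $\eta>0$ this is exactly the upper bound of \Cref{le:normal_tail_1}, which states $\Prob{}{g\ge\eta}\le \phi(\eta)/\eta$. For $\eta=0$ it is trivial, since the left side is $0$ and $\phi(0)>0$.

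As an alternative self-contained route, \Cref{fa:expec} can be used directly. The quantity $\phi(\eta)-\eta(1-\Phi(\eta))$ equals $\int_\eta^\infty (x-\eta)\phi(x)\,dx$, which is nonnegative because its integrand is nonnegative. This gives $g'\le 0$ immediately.

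There is no real obstacle. The only care needed is that $1-\Phi(\eta)>0$ for all $\eta$, so the logarithm and the ratio are well defined, and that $g$ is differentiable, which holds because $\Phi$ is smooth. The monotonicity then follows from the mean value theorem.
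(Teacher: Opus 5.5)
Your proof is correct, and it takes a different route from the paper's.

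\textbf{The paper's route.} The paper works directly with the tail integral. It substitutes $x = z + (\eta_2-\eta_1)$, which moves the domain of $1-\Phi(\eta_2)$ onto $[\eta_1,\infty)$. It then bounds the integrand pointwise, using $(z+\eta_2-\eta_1)^2 - z^2 - (\eta_2^2-\eta_1^2) = 2(\eta_2-\eta_1)(z-\eta_1) \ge 0$ for $z\ge\eta_1$. In other words, $\phi(z+\eta_2-\eta_1)\le e^{-(\eta_2^2-\eta_1^2)/2}\phi(z)$ on the whole domain, and integrating gives the claim.

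\textbf{Your route.} You prove the differential inequality $\frac{d}{d\eta}\log(1-\Phi(\eta)) \le -\eta$. Equivalently, the hazard rate $\phi/(1-\Phi)$ dominates the identity, which is the Mills-ratio bound $\eta(1-\Phi(\eta))\le\phi(\eta)$. You then integrate this inequality between $\eta_1$ and $\eta_2$.

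\textbf{What each buys.} The paper's argument is fully self-contained: one substitution and one algebraic identity, with no auxiliary tail estimate. Yours delegates the analytic content to \Cref{le:normal_tail_1} or \Cref{fa:expec}. In exchange, it shows the structural reason for the inequality. Since $e^{\eta^2/2}(1-\Phi(\eta)) = \sqrt{2\pi}\,(1-\Phi(\eta))/\phi(\eta)$, the Fact says exactly that the Gaussian Mills ratio is non-increasing. Your argument also carries over to any density whose hazard rate admits a comparable lower bound.

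A side observation: neither proof needs $\eta_1\ge 0$. The paper's key inequality only uses $z\ge\eta_1$ and $\eta_2\ge\eta_1$. In your argument, the Mills bound is trivial for $\eta\le 0$ because its left side is nonpositive. So both establish the inequality for all real $\eta_1\le\eta_2$.
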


\subparagraph*{Problem.}
In this paper, we are interested in the $(r, cr)$-Approximate Near Neighbor problem when the underlying metric space $(\mathcal{M}, \dist)$ is $S^{d-1}$ equipped with the angular distance. 
\begin{problem}[Angular ANN]\label{prob:r_cr}
    Consider the metric space $(S^{d-1}, \angle(\cdot, \cdot))$ where $\angle$ is the angular distance, a dataset $\dset \subseteq S^{d-1}$, any given $\gamma > 0$, $c > 1$ and $\delta \in (0,1)$. For any point $\bq \in S^{d-1}$, the Angular $(\gamma, c\gamma)$-\textit{ANN} problem asks to return a point $\bx' \in \dset$ such that $\angle(\bq, \bx') \leq c\gamma$ if there exists at least one point $\bx \in \dset$ such that $\angle(\bq, \bx) \le \gamma$, with probability at least $1-\delta$.
\end{problem} 
\begin{remark}\label{rem:distance}
    We state \Cref{prob:r_cr} in terms of angular distance since it makes presentation of some technical aspects smoother. On the other hand, most literature on the Angular ANN problem equivalently considers the Euclidean distance between points on the unit sphere (e.g., \cite{andoni2015practical,andoni2015optimal}). It is worth recalling that two points on $S^{d-1}$ sharing an angles $\alpha$ have Euclidean distance $r = 2\sin\frac{\alpha}{2}$. We revisit the impact of this connection on the form of some results in the discussion at the end of \Cref{se:analysis}.
\end{remark}

\section{Locality Sensitive Filters for Angular Distance}\label{se:eas}

In this section, we introduce a basic version of the
\emph{Spherical-\lsf\ data structure} for the angular
$(\gamma,c\gamma)$-ANN problem.
Our presentation proceeds in two steps.
First, in \Cref{sse:LSF}, we introduce locality-sensitive filtering and its main performance parameter.
Then, in \Cref{ssec:slsf}, we specialize the framework to the unit sphere equipped with angular distance and describe the resulting indexing and query procedures.
The analysis of this basic construction is deferred to \Cref{se:analysis}.
Although this construction already captures the probabilistic and algorithmic mechanisms underlying Spherical-\lsf, its direct implementation may require evaluating a large number of filters.
We address this algorithmic bottleneck in \Cref{se:tensoring}.

\subsection{Locality-Sensitive Filters}\label{sse:LSF}
We next provide the rigorous definition of a locality-sensitive filter family for the angular distance on the sphere, since this is the case we are working with.
\begin{definition}[Spherical \lsf]\label{def:lsf}
Consider the sphere $S^{d-1}$, any angle $0\le\gamma\le\pi$ and $c > 1$. For two values $0 < q_2 < q_1 < 1$, a family $\lshfam = \{ h:S^{d-1} \to \{0,1\} \}$ of filters is $(\gamma,c\gamma,q_1,q_2)$-locality sensitive if, for $\hash$ sampled uniformly at random from $\lshfam$ and for any two points $\bx, \bq \in S^{d-1}$ we have:
\begin{itemize}
    \item $\Prob{\hash\sim\lshfam}{\hash(\bx) = 1\mid \hash(\bq) = 1}\ge q_1$ if $\angle(\bq, \bx)\le \gamma$;
    \item $\Prob{\hash\sim\lshfam}{\hash(\bx) = 1\mid \hash(\bq) = 1}\le q_2$ if $\angle(\bq, \bx)\ge c\gamma$.
\end{itemize}
We will also refer to $\lshfam$ as spherical \lsf\ family, or simply as \lsf\ family.
\end{definition}

Along the lines of \lsh, a family of filters is as powerful as $q_1$ is large and $q_2$ is small. A convenient and compact way to define its discriminating power is through the parameter $\rho$. 

\begin{definition}\label{def:rho}
Given the probabilities $q_1$ and $q_2$ from \Cref{def:lsf}, we define the parameter $\rho$ of an \lsf\ family $\lshfam$ as 
    \begin{equation}
    \label{eq:def_rho}
        \rho = \frac{\log\frac{1}{q_1}}{\log\frac{1}{q_2}}.     
    \end{equation}
\end{definition}
Parameter $\rho$ provides a compact measure of the discriminating power of an \lsf\ family.
As we show in \Cref{se:analysis}, it determines the performance of
the corresponding \lsf-based data structure.
The relationship between this parameter and the analogous exponent
for \lsh, together with the resulting lower bounds, is discussed in
\Cref{subse:slsf_optimal}.

\subsection{The Basic Spherical-LSF data structure}\label{ssec:slsf}

We begin the description of the data structure for the angular $(\gamma, c\gamma)$-ANN problem by describing the \lsf\ family we use for the angular distance, also known as Gaussian filters, which is based on the definition of suitable spherical caps over the unit sphere. 

\begin{definition}[Gaussian filters]
Let $\tau \geq 0$ be a threshold parameter.
A filter $h$ from the \lsf\ family $\lshfam_\tau$ is defined as 
\begin{equation*}
h(\bx) = \begin{cases}
   1 &\text{if } \theta^\top\bx \geq \tau\\
   0 &\text{otherwise}
\end{cases},
\end{equation*}
where $\theta \sim \norm(\bzero, I/d)$, i.e., the $d$-dimensional vector where each entry is an independent Gaussian random variable with mean $0$ and variance $\frac{1}{d}$.
\end{definition}

\begin{remark}\label{rem:symmetry}
In the specific scenario considered in this paper (which is standard in the literature) the following symmetry properties hold: (i) $\Prob{}{\hash(\bx) = 1}$ \emph{does not depend} on $\bx$ (though it will depend on the parameter $\tau$) and
(ii) for a fixed threshold $\tau$, the joint probability $\Prob{}{\hash(\bx) = \hash(\bq)=1}$ depends only on $\angle(\bx,\bq)$, and it is non-increasing on this value.
Indeed, the two normalized projections, $\sqrt{d}\theta^\top\bq$ and $\sqrt{d}\theta^\top\bx$, are jointly Gaussian with correlation $\cos(\angle(\bx,\bq))$, and their joint upper-tail probability is increasing in the correlation.
Consequently, the boundary angles $\gamma$ and $c\gamma$ determine valid near- and far-pair probabilities for the entire corresponding distance ranges.
Finally, notice that (i) and (ii) obviously imply $\Prob{}{\hash(\bx) = 1\mid\hash(\bq) = 1} = \Prob{}{\hash(\bq) = 1\mid\hash(\bx) = 1}$.
\end{remark} 

Consider an input dataset $\dset$ of $n$ points in $S^{d-1}$. To index $\dset$ using Spherical-\lsf, we first sample $m$ independent vectors $\theta_1, \dots, \theta_m \sim \norm(\bzero, I/d)$, which yield  $m$ independent filters $h_1, \dots, h_m$ from $\lshfam_\tau$.

Next, for every $\bx \in \dset$, we compute its \emph{signature} $\sig(\bx) = \{i \in [m] \mid h_i(\bx) = 1\}$, i.e., the set of filter indices that $\bx$ passes 
(see \Cref{alg:lsf}).

The resulting data-structure is a collection of $m$ buckets $B_1, \dots, B_m$, where each $B_i$ contains all the points that pass the $i$-th filter
(see \Cref{alg:lsf_indexing}).

Finally, to process a query $\bq\in S^{d-1}$, we first compute its signature $\sig(\bq)$, then we look for a close point by scanning all buckets $B_i$, such that $i \in \sig(\bq)$ 
(see \Cref{alg:lsf_query}).


\noindent
\begin{minipage}[t]{0.46\textwidth}
    \begin{algorithm}[H]
      \caption{Spherical-\lsf-Filter}\label{alg:lsf}
      \textbf{Input:} a vector $\bx\in S^{d-1}$, a threshold $\tau\ge 0$, $m$ vectors $\theta_1, \dots, \theta_m \sim \norm(\bzero, I/d)$\\
      \textbf{Output} a set of indices $\sig(\bx)\subseteq [m]$
      \begin{algorithmic}[1]
        \State $\sig(\bx) = \{ i \in [m] \mid \theta_i^\top\bx \geq \tau \}$\;
        \State \Return $\sig(\bx)$\;
      \end{algorithmic}
    \end{algorithm}%
    \begin{algorithm}[H] 
    \caption{Spherical-\lsf-Index}\label{alg:lsf_indexing}
        \textbf{Input:} signatures $\{\sig(\bx): \bx\in\dset\}$\\
        \textbf{Output} a collection of buckets $\buck_1,\ldots, \buck_m$
        \begin{algorithmic}[1]
            \For{$i = 1$ to $m$}
                \State $\buck_i = \{\bx \in \dset \mid i \in \sig(\bx) \}$\;
            \EndFor
            \State \Return $\buck_1,\ldots, \buck_m$
        \end{algorithmic}
    \end{algorithm}
\end{minipage}
\hfill 
\begin{minipage}[t]{0.46\textwidth}
    \begin{algorithm}[H]
    \caption{Spherical-\lsf-Query}\label{alg:lsf_query}
        \textbf{Input:} a query $\bq$, buckets $\buck_1, \dots, \buck_m$\\
        \textbf{Output:} a near neighbor of the query $\bq$, or \texttt{not~found}
        \begin{algorithmic}[1]
            \State compute $\sig(\bq)$\;
            \For{$i \in \sig(\bq)$}
                \For{$\bx\in\buck_i$}
                    \If{$\angle(\bx, \bq)\le c\gamma$}
                        \State \Return $\bx$\;
                    \EndIf
                \EndFor
            \EndFor
            \State \Return \texttt{not~found}
        \end{algorithmic}
    \end{algorithm}
\end{minipage}\bigskip

The geometric intuition behind this scheme is that every filter $h_i$ represents a \emph{spherical cap} of $S^{d-1}$, with $B_i$ containing all input points that belong to the same spherical cap (see \cref{fig:placeholder}).
The idea of using spherical caps (to partition the sphere) in \lsh\ was first introduced in \cite{andoni2015optimal}, where it was called Spherical LSH.
The filtering scheme proposed here is essentially equivalent to that of \cite{becker2016new}, up to their choice of using vectors distributed uniformly on $S^{d-1}$ as the projectors. It is also a special case of the filtering scheme from \cite{christiani2017framework}. We discuss these aspects and their implications in more detail after \Cref{le:cond_prob}.

\begin{figure}[h]
    \centering
    \caption{Figure (a): an example of spherical filters. Blue cap corresponds to $\{\bx\in S^{d-1} \mid \theta_i^\top\bx \geq \tau\}$, while the red one corresponds to $\{\bx \in S^{d-1} \mid \theta_j^\top\bx \geq \tau\}$. Figure (b): pipeline of Spherical-\lsf\ data structure.}
    \includegraphics[width=1\linewidth]{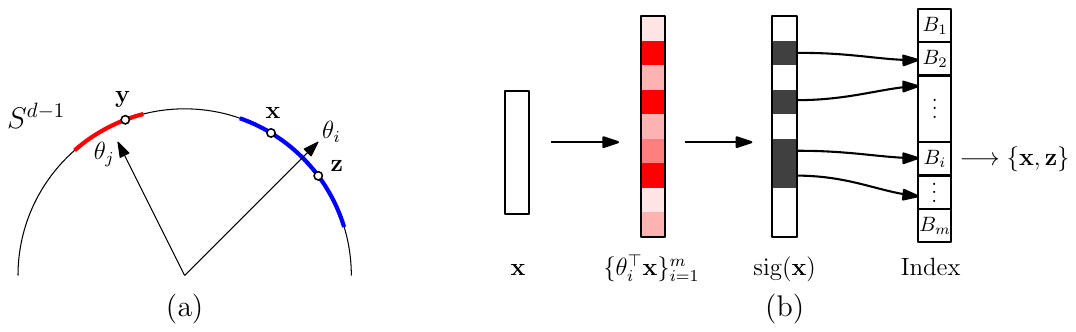}
    \label{fig:placeholder}
\end{figure}



\section{Analysis of Spherical \lsf}\label{se:analysis}
This section analyzes the performance of the data structure introduced in Section \ref{se:eas}. In most literature on the topic, this analysis addresses three aspects:
(i) a characterization of the locality sensitive properties of the proposed scheme, which in our case means possibly tight bounds on the (conditional) probabilities from \Cref{def:lsf};
(ii) a characterization of the resulting parameter $\rho$ (\Cref{def:rho}), measuring the discriminating power of the proposed scheme;
(iii) a quantitative analysis of the space and time requirements of the proposed data structure when solving the $(\gamma, c\gamma)$-ANN problem.
These aspects are respectively covered in the following Sections \ref{subse:cond_prob}, \ref{subse:slsf_optimal} and \ref{se:eas_anns}.
Each of these sections revisits and/or refines ideas that have been introduced or adapted in previous  \lsh\ or \lsf\ literature, presenting them in a unified manner. In each of these sections, we briefly discuss key previous results that are relevant, pointing to differences in our approach or analysis when this is the case.

\subsection{Analysis of collision probability}\label{subse:cond_prob}
The main technical contribution of this subsection is tight bounds on the conditional probability \(\Prob{}{h(\bx) = 1 \mid h(\by) = 1}\), which will then be used to prove optimality of \eas\ in \Cref{subse:slsf_optimal}.

Bounding $\Prob{}{h(\bx) = 1 \mid h(\by) = 1}$ is a crucial technical ingredient in all key contributions in \lsh\ and \lsf\ for Euclidean distance on the sphere, in particular \cite{andoni2015practical,andoni2014beyond,andoni2015optimal,  becker2016new,christiani2017framework}. In the remainder, we give an elementary proof that, while proceeding similarly to \cite{andoni2015optimal} (which focuses on a specific value of the threshold), is self-contained and more streamlined, avoiding asymptotic notation and providing results that are slightly tighter and that can be compactly expressed in terms of $1 - \Phi(x)$. We briefly discuss analogous or related proofs given in the aforementioned references at the end of this subsection.

\begin{lemma}
\label{le:cond_prob}
Assume $\bx,\by\in S^{d-1}$ share an angle \(\alpha\) and consider the filter $\hash(\bx) = \mathbbm{1}\{\theta^\top \bx \geq \tau\}$, where $\theta = (\theta_1,\ldots , \theta_d)^\top \sim \norm\left(\bzero, I/d\right)$ and $\tau$ satisfies $\tau\sqrt{d}\tan\frac{\alpha}{2}\ge 1$. Let $t = \sqrt{d}\tau$. We have:
\begin{equation*}
    \begin{aligned}
     & 1 - \Phi(t\tan\tfrac{\alpha}{2}) \le \Prob{}{h(\bx) = 1 \mid h(\by) = 1} \le 2(1 - \Phi(t\tan\tfrac{\alpha}{2})),                                              & 0 < \alpha\le\tfrac{\pi}{2}   \\
     & \frac{1}{4}(1 - \Phi(t\tan\tfrac{\alpha}{2}))^{\left(1 + \frac{2}{t^2}\right)^2}\le \Prob{}{h(\bx) = 1 \mid h(\by) = 1} \le 1 - \Phi(t\tan\tfrac{\alpha}{2}), & \tfrac{\pi}{2}\le\alpha < \pi
    \end{aligned}
\end{equation*}
\end{lemma}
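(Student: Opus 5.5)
Rescale first. With $X=\sqrt d\,\theta^\top\bx$ and $Y=\sqrt d\,\theta^\top\by$ we have standard normals with correlation $\rho_\alpha=\cos\alpha$, and the quantity to bound is
\[
\Prob{}{X\ge t\mid Y\ge t}=\frac{\Prob{}{X\ge t,\,Y\ge t}}{1-\Phi(t)}.
\]
Rather than conditioning on $Y$, I will use the symmetric rotated coordinates
\[
U=\frac{X+Y}{\sqrt{2(1+\cos\alpha)}},\qquad V=\frac{X-Y}{\sqrt{2(1-\cos\alpha)}}.
\]
These are independent standard normals. In these coordinates
\[
\{X\ge t,\,Y\ge t\}=\{\,|V|\sqrt{1-\cos\alpha}\le U\sqrt{1+\cos\alpha}-\sqrt2\,t\,\}.
\]
Equivalently, since $\sqrt{(1-\cos\alpha)/(1+\cos\alpha)}=\tan\frac{\alpha}{2}$ and $\sqrt{2/(1+\cos\alpha)}=1/\cos\frac{\alpha}{2}$,
\[
\{X\ge t,\,Y\ge t\}=\Bigl\{U\ge \frac{t}{\cos\frac{\alpha}{2}}+|V|\tan\tfrac{\alpha}{2}\Bigr\}.
\]
This gives the exact expression
\[
P_\alpha:=\Prob{}{X\ge t,\,Y\ge t}=\int_{\mathbb R}\phi(v)\,\Bigl(1-\Phi\bigl(\tfrac{t}{\cos(\alpha/2)}+|v|\tan\tfrac{\alpha}{2}\bigr)\Bigr)\,dv .
\]
The target quantity $1-\Phi(t\tan\frac{\alpha}{2})$ should come out of comparing $P_\alpha$ with $1-\Phi(t)$ through Fact~\ref{fa:standard_ratio} and Lemma~\ref{le:normal_tail_1}. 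The exponent already matches: $\frac{t^2}{\cos^2(\alpha/2)}-t^2=t^2\tan^2\frac{\alpha}{2}$.

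\textbf{Upper bounds.} For $\alpha\le\pi/2$, drop the $|v|$ term, giving $P_\alpha\le 1-\Phi(t/\cos\frac{\alpha}{2})$. Fact~\ref{fa:standard_ratio} then yields $e^{-t^2\tan^2(\alpha/2)/2}$. To convert this exponential into $2(1-\Phi(t\tan\frac{\alpha}{2}))$ I will use Lemma~\ref{le:normal_tail_1}. The condition $t\tan\frac{\alpha}{2}\ge 1$ gives $s/(s^2+1)\ge 1/(2s)$ for $s=t\tan\frac{\alpha}{2}$, and the resulting prefactor $s$ must be absorbed. I expect this needs a sharper argument: keep the $|v|$ term and bound the ratio of Mills quantities, using $\tan\frac{\alpha}{2}\le 1$ for $\alpha\le\pi/2$. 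One route is to view the conditional as $\Prob{}{X\ge t\mid Y\ge t}\le\Prob{}{Z\ge \frac{(1-\cos\alpha)t}{\sin\alpha}}$, where $X=\cos\alpha\,Y+\sin\alpha\,Z$. Since $(1-\cos\alpha)/\sin\alpha=\tan\frac{\alpha}{2}$, this is exactly the target, with a factor $2$ of slack to handle the overshoot of $Y$ above $t$. Concretely, I would condition on $Y=y\ge t$ and use $\Prob{}{Z\ge (t-y\cos\alpha)/\sin\alpha}$, which is increasing in $y$. The overshoot of $Y$ beyond $t$ has conditional mean about $1/t$ by Fact~\ref{fa:expec}, so it costs at most a constant factor. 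I will check that this factor is $2$ under $t\tan\frac{\alpha}{2}\ge1$. For $\alpha\ge\pi/2$ we have $\cos\alpha\le0$, so $(t-y\cos\alpha)/\sin\alpha\ge t(1-\cos\alpha)/\sin\alpha=t\tan\frac{\alpha}{2}$ for every $y\ge t$. This gives the clean upper bound $1-\Phi(t\tan\frac{\alpha}{2})$ with no constant.

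\textbf{Lower bounds.} For $\alpha\le\pi/2$, $\cos\alpha\ge0$, so every $y\ge t$ gives a threshold $(t-y\cos\alpha)/\sin\alpha\le t\tan\frac{\alpha}{2}$. The lower bound $1-\Phi(t\tan\frac{\alpha}{2})$ is then immediate by averaging over $y$. For $\alpha\ge\pi/2$, the threshold $(t+y|\cos\alpha|)/\sin\alpha$ grows with $y$. I will restrict to the window $y\in[t,t+2/t]$. Its conditional probability given $Y\ge t$ is a constant, at least $1/2$ by Lemma~\ref{le:normal_tail_1} and Fact~\ref{fa:standard_ratio}. On this window the threshold is at most $t\tan\frac{\alpha}{2}(1+\frac{2}{t^2})$, up to bounding $|\cos\alpha|/\sin\alpha$ by $\tan\frac{\alpha}{2}$, which holds since $|\cos\alpha|\le1-\cos\alpha$. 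Finally, to compare $1-\Phi(s(1+\epsilon))$ with $(1-\Phi(s))^{(1+\epsilon)^2}$ for $s\ge1$, I will use both sides of Lemma~\ref{le:normal_tail_1}, absorbing the polynomial prefactors into the remaining constant $\frac12$ to reach $\frac14$.

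\textbf{Main obstacle.} The hardest step is the upper bound for $\alpha\le\pi/2$ with the exact constant $2$: it requires controlling the overshoot integral precisely. I would try the monotonicity-in-$y$ argument combined with Fact~\ref{fa:expec} first.
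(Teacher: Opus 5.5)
Three of your four bounds are sound and close to the paper's. The lower bound for $\alpha\le\pi/2$ and the upper bound for $\alpha\ge\pi/2$ follow from monotonicity of the threshold $(t-y\cos\alpha)/\sin\alpha$ in $y$. The lower bound for $\alpha\ge\pi/2$ uses the window $[t,t+2/t]$. There, your use of Fact~\ref{fa:standard_ratio} gives conditional mass at least $1-e^{-2-2/t^2}\ge\frac12$, which is slightly cleaner than the paper's Markov argument. The final comparison also goes through, because $(2\pi)^{\epsilon+\epsilon^2/2}\ge 1+\epsilon$ with $\epsilon=2/t^2$.

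The gap is the one you flag: you never prove the upper bound $2(1-\Phi(t\tan\frac{\alpha}{2}))$ for $\alpha\le\pi/2$, and the routes you sketch do not give it. Dropping $|v|$ before applying Fact~\ref{fa:standard_ratio} loses a factor of order $s=t\tan\frac{\alpha}{2}$, as you note. The conditioning-on-$Y$ route fails as stated, for two reasons:
\begin{itemize}
\item For $\cos\alpha\ge 0$ the inequality $\Prob{}{X\ge t\mid Y\ge t}\le\Prob{}{Z\ge t\tan\frac{\alpha}{2}}$ is reversed; it is your lower bound.
\item A bound on the mean of the overshoot $Y-t$ (Fact~\ref{fa:expec}) cannot upper-bound $\Expec{}{g(Y-t)\mid Y\ge t}$ for the increasing, roughly exponential function $g(u)=\Prob{}{Z\ge s-u\cot\alpha}/\Prob{}{Z\ge s}\approx e^{\lambda u}$, where $\lambda=t\cos\alpha/(1+\cos\alpha)$. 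Jensen goes the wrong way, and Markov tail bounds are too weak against exponential growth.
\end{itemize}
Because $\lambda$ is comparable to the overshoot's own decay rate $t$, you would need the full exponential tail of the overshoot. The resulting constant $t/(t-\lambda)=1+\cos\alpha$ tends to $2$ as $\alpha\to 0$, so there is no slack for a lossy estimate.

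The missing idea, which is exactly the paper's argument, is already available in your $(U,V)$ coordinates; these are precisely the paper's representation $(X,Y)=(aU+bV,aU-bV)$ with $a=\cos\frac{\alpha}{2}$, $b=\sin\frac{\alpha}{2}$. Apply Fact~\ref{fa:standard_ratio} pointwise in $v$, before discarding $|v|$. With $\eta(v)=(t+b|v|)/a\ge t$ you get $1-\Phi(\eta(v))\le e^{-(\eta(v)^2-t^2)/2}(1-\Phi(t))$. Since $a^2+b^2=1$, the exponent combines with $\phi(v)$ into a completed square, $v^2+\eta(v)^2-t^2=(|v|+tb)^2/a^2$. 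Therefore
\[
\frac{P_\alpha}{1-\Phi(t)}\le 2\int_0^\infty\frac{1}{\sqrt{2\pi}}e^{-\frac{(v+tb)^2}{2a^2}}\,dv=2a\left(1-\Phi\left(t\tan\tfrac{\alpha}{2}\right)\right)\le 2\left(1-\Phi\left(t\tan\tfrac{\alpha}{2}\right)\right).
\]
Integrating against the Gaussian weight recovers exactly the Mills-ratio factor you were losing. This step needs neither Lemma~\ref{le:normal_tail_1} nor the hypothesis $t\tan\frac{\alpha}{2}\ge 1$.
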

Although technically interesting, the proof of \Cref{le:cond_prob} is not necessary for understanding the data structure. Those bounds serve primarily to provide a more precise analysis for each range of the parameters.
Interested readers can find the analysis in \Cref{apx:proof_cond_prob}.

\subparagraph*{Discussion.} As we mentioned earlier, variants of \Cref{le:cond_prob} are crucial ingredients of \lsh/\lsf\ approaches to ANN on the sphere. We briefly review previous contributions on this aspect, highlighting connections and differences with the analysis presented here.

Borrowing from previous work \cite{andoni2014beyond}, Andoni and Razenshteyn \cite{andoni2015optimal} used Gaussian random projectors as we do (up to scaling), but they focused on a specific value of the threshold, with a different goal in mind. In their Spherical \lsh\ scheme, random projectors following $\norm(\bzero, I_d)$ are used with a fixed threshold to partition $S^{d-1}$ using a (possibly not too large) set of sets carved out of  spherical caps with the same height. Considering the $1/\sqrt{d}$ scaling that we perform, their Theorem 3.1 corresponds to our \Cref{le:cond_prob} for $\tau = d^{-1/4}$ (see the claim of their Theorem 3.1 and its proof in their Appendix A). In particular, recall that two vectors on the unit sphere sharing an angle $\alpha$ have Euclidean distance $r = 2\sin\frac{\alpha}{2}$. If (i) we use \Cref{le:normal_tail_1} to approximate $1 - \Phi(t\tan\frac{\alpha}{2})$, (ii) replace $\sin\frac{\alpha}{2}$ with $r/2$ and, (iii) take the (natural) logarithm of the reciprocals of the upper and lower bounds provided by \Cref{le:cond_prob} for the joint probability, we obtain an expression almost identical to (2) and (3) from \cite[Thm. 3.1]{andoni2015optimal}, namely
\begin{equation*}
    \ln \left( \frac{1}{\Prob{}{\hash(\bx)  = 1\mid\hash(\by) = 1}} \right) = \frac{r^2}{4 - r^2}\frac{\sqrt{d}}{2} + O(\ln d).
\end{equation*}

Moreover, at least when $0\le\alpha\le\frac{\pi}{2}$, the ratio between upper and lower bound is constant, while the one in \cite[Theorem 3.1]{andoni2015optimal} is not, due to the presence of factors $1 \pm d^{-\Omega(1)}$ before the logarithm.
This said, the proofs and results given in \cite[Theorem 3.1]{andoni2015optimal} could be extended to general thresholds, in the end yielding asymptotic results on the parameter $\rho$ equivalent to those we will show in \Cref{subse:slsf_optimal}.
It is worth noting that the very same collision probability resurfaces in the analysis of the cross-polytope \lsh\ scheme proposed in \cite{andoni2015practical}, where the proof of the lower bound given in their Theorem 2 relies on computing a quantity $\Lambda(\tau, \eta)$.
This is a conditional probability, which turns out to be exactly $\Prob{}{\hash(\bx) = 1\mid\hash(\by) = 1}$ when $t = \eta$ and the angle $\alpha$ between $\bx$ and $\by$ satisfies $\cos\alpha = 1 - \frac{\tau^2}{2}$. Hence, \Cref{le:cond_prob} also provides a clean and tight characterization of this quantity.

The problem of computing the conditional probability from \Cref{le:cond_prob} has also been considered in a more general \lsf\ framework in \cite{christiani2017framework}. Here (see their Lemmas B.3 and B.4), they are interested in the joint probability that two standard normal variables $X$ and $Y$ with bounded correlation both exceed two different thresholds. The setting of our \Cref{le:cond_prob} corresponds to choosing $\lambda = 0$ in their framework. On the other hand, their proofs either do not cover all possible ranges for the angle (e.g., for the lower bound, when the angle exceeds $\pi/2$) or are only tight up to polynomial factors in (our) $t$ when specialized to our setting, since they rely on a previous result (Lemma B.2 in their appendix) estimating the probability of a multivariate normal vector belonging to a closed convex body, which provides a coarser approximation than \Cref{le:cond_prob}.

Finally, though \cite{becker2016new} introduced the idea of \lsf, their results are not straightforwardly equivalent to \Cref{le:cond_prob}, since the problem of estimating the probability of collision under random Gaussian projectors cannot be immediately rephrased as the geometrical problem of computing the volume of the intersection of spherical caps, though the two are clearly related due to concentration of the measure of $\norm(\bzero, I/d)$ on the surface of the unit sphere.

\subsection{Optimality of \eas}\label{subse:slsf_optimal}
In this section, we prove that the discriminating power of the spherical filters proposed in \Cref{sse:LSF} is asymptotically optimal, as measured by the parameter $\rho$ from \Cref{def:rho}.

\subparagraph*{Relation to \lsh.}
Before delving into the technical details, we provide some context that might hopefully benefit the reader, briefly discussing the reasons behind \Cref{def:rho} and the relationships with the analogous parameter $\rho$ defined for \lsh.
\lsf\ was introduced in \cite{becker2016new}, where they also defined $\rho$ as we do in this paper. The reason for their definition of $\rho$ stems from their analysis \cite[Section 3, Thm. 3.1]{becker2016new} of the time and memory costs of the data structure they propose (essentially, the same we introduced in \Cref{sse:LSF} up to the choice of the filters).
The authors of \cite{becker2016new} discuss the asymptotic behavior of the $\rho$ computed for the filters they propose, comparing it with the $\rho$ computed for the \lsh\ schemes proposed in \cite{andoni2015practical,andoni2015optimal}, though the two are clearly defined differently.
There is a practical justification for drawing this comparison though, since \cite{becker2016new} also makes a connection (similar to the one for \lsh) between the $\rho$ from \Cref{def:rho} and the performance of their \lsf-based data structure. Indeed, the connection between the $\rho$ from \lsf\ as defined in \Cref{def:rho} (and \cite{becker2016new}) and the analogous parameter for \lsh\ runs deeper.
In particular, Christiani proved \cite[Section 5, Lemma 5.1 and Thm. 1.4]{christiani2017framework} that every \lsf\ scheme for the $\ell_s$ distance with given value $\rho$ according to \Cref{def:rho} yields an \lsh\ scheme with the same value of the corresponding parameter $\rho$, up to lower order terms. This in particular implies that every \lsf\ scheme for $\ell_s$  must satisfy the general lower bound given in \cite[Thm. 3.1 and Cor. 3.2]{o2014optimal}, which is $1/c^2 - o_d(1)$ for the Euclidean distance.


We next prove (asymptotic) optimality of \eas.
\begin{theorem}\label{le:min_cap_rho}
Assume $\angle(\bx, \bq) = \gamma$, $\angle(\by, \bq) = c\gamma$ for $c > 1$ and $\tau$ is chosen so that $t = \sqrt{d}\tau$ satisfies  $t\tan\frac{\gamma}{2}\ge 1$. If $q_1 = \Prob{}{\hash(\bx) = 1 \mid \hash(\bq) = 1}$ and $q_2 = \Prob{}{\hash(\by) = 1 \mid \hash(\bq) = 1}$ then:
\[
    \rho = \frac{\log\frac{1}{q_1}}{\log\frac{1}{q_2}}\le\frac{1}{c^2} + o_{d}(1).
\]
\end{theorem}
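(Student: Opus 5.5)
The plan is to plug the two-sided bounds of \Cref{le:cond_prob} into the definition of $\rho$, using the lower bound for $q_1$ and the upper bound for $q_2$. The Gaussian tail estimates of \Cref{le:normal_tail_1} then turn both logarithms into $\tfrac{t^2}{2}\tan^2(\cdot)$ plus lower-order terms. Finally, an elementary trigonometric inequality compares the two leading terms.

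First, check that \Cref{le:cond_prob} applies to both pairs. For $(\bx,\bq)$ this is the hypothesis $t\tan\frac{\gamma}{2}\ge 1$. For $(\by,\bq)$, monotonicity of $\tan$ on $[0,\pi/2)$ gives $t\tan\frac{c\gamma}{2}\ge t\tan\frac{\gamma}{2}\ge 1$. We assume $c\gamma<\pi$; the case $c\gamma=\pi$ is degenerate, since then $q_2=0$ and $\rho=0$. Write $a=t\tan\frac{\gamma}{2}$ and $b=t\tan\frac{c\gamma}{2}$.

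For $q_1$, \Cref{le:cond_prob} gives $q_1\ge \frac14(1-\Phi(a))^{(1+2/t^2)^2}$ in both angle ranges, since the first-range bound is even larger. Combining with the lower tail bound of \Cref{le:normal_tail_1},
\[
\log\tfrac{1}{q_1}\le \Bigl(1+\tfrac{2}{t^2}\Bigr)^2\Bigl(\tfrac{a^2}{2}+\log\tfrac{\sqrt{2\pi}(a^2+1)}{a}\Bigr)+\log 4 .
\]
For $q_2$, use $q_2\le 2(1-\Phi(b))$ together with the upper tail bound (where $b\ge1$) to get
\[
\log\tfrac{1}{q_2}\ge \tfrac{b^2}{2}+\log\bigl(\sqrt{2\pi}\,b\bigr)-\log 2\ge \tfrac{b^2}{2}-\log 2 .
\]

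Dividing gives
\[
\rho\le \frac{\tan^2\frac{\gamma}{2}}{\tan^2\frac{c\gamma}{2}}\,(1+o(1))+O\!\Bigl(\frac{\log t}{t^2}\Bigr).
\]
It remains to show $\tan\frac{c\gamma}{2}\ge c\tan\frac{\gamma}{2}$. This follows because $u\mapsto \tan(u)/u$ is increasing on $(0,\pi/2)$, a consequence of the convexity of $\tan$ there together with $\tan 0=0$. Applying this at $u=\gamma/2$ and $u=c\gamma/2$ yields the leading term $\le 1/c^2$.

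The main obstacle is making the error terms genuinely $o_d(1)$. Every correction term vanishes only when $t=\sqrt d\,\tau\to\infty$ with $d$: the factor $(1+2/t^2)^2$, the additive $\log 4$, and the $\log a/a^2$ and $\log 2/b^2$ terms. So I would read the statement with $\tau$ fixed (or at least $\sqrt d\,\tau\to\infty$) and $\gamma, c$ fixed. The key bookkeeping step is to divide the numerator and denominator by $t^2/2$ and check that each correction is $O(\log t/t^2)$ uniformly in the fixed angles. This uses $b\ge a\ge 1$ to keep the logarithms of $a$ and $b$ nonnegative and to prevent the denominator from degenerating.
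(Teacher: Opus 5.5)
Your proposal is correct and follows essentially the same route as the paper. You plug the bounds of \Cref{le:cond_prob} into $\rho$, linearize the logarithms with \Cref{le:normal_tail_1}, and compare the leading terms via $\tan\frac{c\gamma}{2}\ge c\tan\frac{\gamma}{2}$ (convexity of $\tan$), with all corrections vanishing as $t=\sqrt{d}\,\tau\to\infty$ for fixed angles. The only differences are cosmetic: the paper splits into three cases according to where $\gamma$ and $c\gamma$ lie relative to $\pi/2$, whereas you merge them by using the weakest lower bound on $q_1$ and the weakest upper bound on $q_2$ valid over the whole range, and you explicitly verify the lemma's hypothesis $t\tan\frac{c\gamma}{2}\ge 1$ for the far pair, which the paper leaves implicit.
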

The proof of \Cref{le:min_cap_rho} follows directly from \Cref{le:cond_prob} and is deferred to \Cref{apx:proof_min_cap_rho}.\\

We briefly note that $\tau$ is assumed to be a sufficiently large, fixed parameter to ensure that $\rho\rightarrow\frac{1}{c^2}$ as $d\rightarrow\infty$. In practice, $\tau = \omega(\frac{1}{\sqrt{d}})$ is enough for the same conclusions to apply.

\begin{remark}\label{rem:angle_vs_dist}
Since we decided to work directly with the angular distance, rather than the Euclidean distance on the unit sphere, one might wonder whether \Cref{le:min_cap_rho} implies optimality of Spherical \lsf\ with respect to the Euclidean distance on the sphere, which is the metric normally considered in the literature (e.g., \cite{andoni2015practical,andoni2015optimal}).
Consider two point pairs at angular distances $\gamma$ and $c\gamma$ respectively. The ratio $\hat{c}$ between the corresponding Euclidean distances is:
\[
    \hat{c} = \frac{\sin\frac{c\gamma}{2}}{\sin\frac{\gamma}{2}}\le c,
\]
where the inequality holds for all $0 < \gamma < \pi$. Hence, $\rho\le\frac{1}{c^2} + o_d(1)$ from \Cref{le:min_cap_rho} implies $\rho\le\frac{1}{\hat{c}^2} + o_d(1)$ and thus asymptotic optimality as $d\rightarrow\infty$.
\end{remark}

\subsection{Candidate and Space Complexity of the Basic Construction}\label{se:eas_anns}
We next analyze the performance of Spherical-\lsf\ in terms of the amount of extra memory needed to store the indexing structure and the cost of answering a query. Our analysis proceeds along the lines of the one presented in \cite[Thm. 3.4]{har2012approximate} for \lsh, albeit with the notable difference that in our case, we use the threshold $\tau$ as our lever to achieve desired (expected) space/time trade-offs. The following result holds:

\begin{theorem}\label{thm:main_thm}
Let $n$ be sufficiently large and $\delta$ be any constant in $ (0,1)$.  Assume  
$\tau \geq 
\sqrt{\frac{2\log n}{d \tan^2 \frac{c\gamma}{2}}}$ and $m = \left\lceil \frac{\ln (1/\delta)}{p_1} \right\rceil$, where $p_1$ is the joint probability that two points at angular distance $\gamma$ pass the same filter. Then, for any query $\bq$, Spherical-\lsf\ solves Angular $(\gamma, c\gamma)$-ANN by correctly answering with probability at least $1-\delta$ and performing $O(n^{\rho})$ angular distance computations in expectation.  
The expected size of the data structure (other than the space required to store the data points) is $O(n^{1+\rho})$.
\end{theorem}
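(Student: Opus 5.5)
Let $p = \Prob{}{h(\bx)=1}=1-\Phi(t)$ denote the marginal probability that a point passes a filter, with $t=\sqrt{d}\tau$. Let $p_1 = p\,q_1$ and $p_2 = p\,q_2$ be the joint probabilities that a point at angular distance $\gamma$ (respectively $\ge c\gamma$) from $\bq$ passes the same filter as $\bq$. By \Cref{rem:symmetry}, $p_2$ is an upper bound for all far points.

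\textbf{Correctness.} Suppose $\bx\in\dset$ has $\angle(\bq,\bx)\le\gamma$. The $m$ filters are independent and each is passed by both $\bq$ and $\bx$ with probability at least $p_1$. So the probability that no bucket in $\sig(\bq)$ contains $\bx$ is at most $(1-p_1)^m\le e^{-p_1 m}\le\delta$ by the choice of $m$. Whenever some bucket does contain $\bx$, \Cref{alg:lsf_query} returns either $\bx$ or some other point within distance $c\gamma$ found earlier. It never returns a point at distance more than $c\gamma$, since it checks each candidate explicitly.

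\textbf{Query cost.} The only distance computations that fail the test are on far points. In the worst case, the number of computations is at most the number of pairs (far point, filter in $\sig(\bq)$ containing it), plus $1$ for the final successful check. By linearity, the expected number is at most $1 + n m p_2 = 1 + n\,\lceil \ln(1/\delta)/p_1\rceil\, p\,q_2 \approx \ln(1/\delta)\, n\, q_2/q_1$.

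I now need $n q_2/q_1 = O(n^{\rho})$. By \Cref{def:rho}, $q_1 = q_2^{\rho}$, so $n q_2/q_1 = n\,q_2^{1-\rho}$. It therefore suffices to show $q_2\le 1/n$. This is where the threshold assumption enters. Applying the upper bounds of \Cref{le:cond_prob} at angle $c\gamma$ (both cases give at most $2(1-\Phi(t\tan\frac{c\gamma}{2}))$), together with \Cref{le:normal_tail_1}, gives
\[
q_2\le \frac{2}{\sqrt{2\pi}\,t\tan\frac{c\gamma}{2}}\, e^{-t^2\tan^2\frac{c\gamma}{2}/2}\le \frac{2}{\sqrt{2\pi\cdot 2\log n}}\cdot\frac1n\le\frac1n
\]
for $n$ large, since $t^2\tan^2\frac{c\gamma}{2}\ge 2\log n$ (reading $\log$ as natural). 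The hypothesis $t\tan\frac{\gamma}{2}\ge1$ of \Cref{le:cond_prob} must also hold; if it does not follow from the threshold assumption, I would add it as a standing condition, as in \Cref{le:min_cap_rho}. The case where $c\gamma\ge\pi$ or the pair is antipodal is degenerate and will be treated separately. Hence $n q_2^{1-\rho}\le n\cdot n^{-(1-\rho)}=n^{\rho}$, and the expected query cost is $O(n^\rho)$ for constant $\delta$.

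\textbf{Space.} The index stores $\sum_i |\buck_i|$ entries, with expectation $n m p = n\lceil\ln(1/\delta)/(p q_1)\rceil p = O(n/q_1) = O(n\,q_2^{-\rho})$. This is not immediately $O(n^{1+\rho})$, because $q_2$ could be much smaller than $1/n$ if $\tau$ is large. I expect this to be the main obstacle: the bound $q_2\le 1/n$ goes in the wrong direction for space. To fix it, I would use the lower bound of \Cref{le:cond_prob}, $q_1\ge 1-\Phi(t\tan\frac{\gamma}{2})$ (or its $\frac14(\cdot)^{(1+2/t^2)^2}$ version), and argue at the threshold where the assumption is tight, $t^2\tan^2\frac{c\gamma}{2}=2\log n$. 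There $1/q_1=O(n^{\rho})$: by \Cref{le:normal_tail_1} and \Cref{fa:standard_ratio}, $q_1$ and $q_2$ differ from $n^{-\tan^2(\gamma/2)/\tan^2(c\gamma/2)}$ and $1/n$ only by polylogarithmic factors, and these are absorbed into $\rho$. The proof will therefore take $\tau$ equal to the stated bound, so that $q_2=\Theta(1/n)$ up to lower-order factors. The bucket headers add $m=O(1/(pq_1))$ words, and I would bound this by the same quantity when $p=1-\Phi(t)$ is not too small. Otherwise I would note that empty buckets need not be stored, and only the nonempty ones, at most $\sum_i|\buck_i|$ of them, are kept.
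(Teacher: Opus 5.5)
Your proposal is correct at the paper's level of rigor and follows its proof essentially step by step. It uses the same independence bound $(1-p_1)^m\le e^{-mp_1}\le\delta$ for correctness, and the same reduction of the expected candidate count to $O(nq_2/q_1)=O(nq_2^{1-\rho})$ with $q_2\le 1/n$ forced by the threshold. You check that last condition directly via \Cref{le:cond_prob} and \Cref{le:normal_tail_1}, a bit more cleanly than the paper's inversion of $\Phi$. The space bound $O(n/q_1)=O(nq_2^{-\rho})$ is also the paper's, and fixing $\tau$ at the threshold makes explicit what the paper assumes when it writes $q_2\approx 1/n$.

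One caution on that last step. For every admissible $\tau$ you in fact have $q_2=O\bigl(1/(n\sqrt{\log n})\bigr)$, so $n/q_1=nq_2^{-\rho}$ carries an extra factor of order at least $(\log n)^{\rho/2}$. That factor is not literally ``absorbed into $\rho$'': what you (like the paper) actually obtain is space $n^{1+\rho}$ up to a polylogarithmic factor. An exact $O(n^{1+\rho})$ would need a $\tau$ with $q_2=1/n$, and any such $\tau$ necessarily lies below the stated threshold.
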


\begin{proof}
    We define the following notation for the remainder of this proof: (i) $p_0 = \Prob{}{\hash(\bx) = 1}$, (ii) $p_1 = \Prob{}{\hash(\bx) = 1\wedge\hash(\bq) = 1\mid\angle(\bx, \bq) = \gamma}$, (iii) $p_2 = \Prob{}{\hash(\by) = 1\wedge\hash(\bq) = 1\mid \angle(\by, \bq) = c\gamma}$.
    To begin, note that if no point within distance $\gamma$ from $\bq$ exists, there is nothing to prove. Next, consider a data point $\bx \in \dset$ with angle $\angle(\bq,\bx) \leq \gamma$. 
    Observe that the query algorithm retrieves $\bx$ from a given bucket $B_i$ if $h_i(\bx) = h_i(\bq) = 1$. By definition, this happens with probability at least $p_1$. Since the $m$ random projectors are mutually independent,  the probability that $\bx$ is not found in any of the $m$ buckets is upper bounded by 
    $$
     \prod_{i=1}^{m}(1-p_1) = (1-p_1)^m \leq e^{-mp_1}\le \delta \, .
    $$
    

    In order to bound the query time of the data structure, we focus on the worst case in which a unique point $\bx \in \mathcal{X}$ exists, such that $\angle(\bq,\bx) \leq \gamma$, while $\angle(\bq,\by) \geq c\gamma$ for every $\by \in \mathcal{X} \setminus \{\bx\}$.
    
    Let $q_1 = p_1/p_0$ and $q_2 = p_2/p_0$ be the conditional probabilities that $\mathbf{x}$ and $\mathbf{y}$ are inserted in the filter given that $\mathbf{q}$ has been inserted, respectively. For constant failure probability $\delta$, $m = \Omega(\frac{1}{p_1}) = \Omega(\frac{1}{p_0q_1})$. 

    On average, \Cref{alg:lsf_query} examines $O(m p_2 n)$ points per query. Substituting $m$, the expected number of candidates is $O\left( \frac{q_2}{q_1} n \right)$.

    To achieve the target complexity, we seek a parameter $\tau$ such that $\frac{q_2}{q_1} n \leq n^\rho$, where $\rho = \frac{\log q_1}{\log q_2}$. Through algebraic manipulation, this inequality is satisfied when $q_2 \leq 1/n$.

    By \Cref{le:cond_prob}, $q_2$ is bounded by $2(1 - \Phi(t \tan \frac{c\gamma}{2}))$.
    Thus, the condition $q_2 \leq 1/n$ holds if
    
    $$
    t \geq \frac{\Phi^{-1}(1 - \frac{1}{2n})}{\tan \frac{c\gamma}{2}} \implies \tau \geq \frac{\Phi^{-1}(1 - \frac{1}{2n})}{\sqrt{d} \tan \frac{c\gamma}{2}}.
    $$
    
    Using the tail bound $1 - \Phi(z) \leq \frac{\phi(z)}{z}$ from \Cref{le:normal_tail_1}, we set $\frac{\phi(z)}{z} = \frac{1}{2n}$.
    Taking the logarithm, we obtain $\frac{z^2}{2} + \log z + \log \sqrt{2\pi} = \log(2n)$.
    For $n \to \infty$, the $z^2$ term dominates the logarithmic factors, yielding $z \approx \sqrt{2 \log n}$.
    Consequently, for large $n$, we can approximate the required threshold as
    
    $$
    \tau \geq \frac{\sqrt{2 \log n}}{\sqrt{d} \tan \frac{c\gamma}{2}}.
    $$

    Concerning the size of the data structure, the space complexity is dominated by the storage of the $m$ buckets.
    Since each of the $n$ points is stored in a bucket only if it passes the corresponding spherical filter, the expected total number of entries is $n \cdot m \cdot p_0$.
    Given that $m = O(1/(p_0 q_1))$, the average space complexity is
    
    $$
    O\left( n \cdot \frac{1}{p_0 q_1} \cdot p_0 \right) = O\left( \frac{n}{q_1} \right).
    $$

    By the definition of $\rho$, we have $q_1 = q_2^\rho$.
    Recalling our condition $q_2 \approx 1/n$, it follows that $1/q_1 \approx n^\rho$.
    Therefore, the average space complexity is $O(n^{1+\rho})$.
\end{proof}

\Cref{thm:main_thm} bounds the expected number of candidate points examined by the query algorithm, but it does not yet provide the same bound on the total query time.
Indeed, the direct implementation first evaluates all $m$ independently sampled filters in order to construct $\sig(\bq)$.
Although only $O(n^\rho)$ candidates are examined in expectation, the number $m$ of filter evaluations can be substantially larger and may become the dominant term.
In principle, this is important, since the query time should also include the number of filter evaluations that need to be performed. A similar problem arises (and is discussed) for the number of hash function computations in the \lsh\ literature (e.g., see \cite[Thm. 3.4]{har2012approximate} or \cite[Thm. 2.3]{andoni2015optimal}, where the authors assume $p_1, p_2\ge 1/n^{o_c(1)}$) and it can be addressed in a number of ways in \lsf. These include random product codes that modify the distribution from which filters are sampled \cite{becker2016new}, or powering and tensoring techniques  that amplify the locality sensitive properties of the filters \cite{christiani2017framework}. While these theoretical constructions allow to reduce the number of filter evaluations (e.g., bringing it down to $O(n^{\rho})$ in expectation), computing the signature of a query vector is rarely a bottleneck in practice (e.g., \cite{pham2022falconn++}). One of the reasons is that multiple inner products between independent standard normal vectors and a query vector can be effectively and very efficiently simulated using pseudo-random rotations, e.g., based on the Fast Hadamard Transform \cite{andoni2015practical,pham2022falconn++}.\\

In conclusion, the basic construction should be viewed as isolating the probabilistic components of Spherical \lsf.
In \Cref{se:tensoring} we address the remaining algorithmic bottleneck: efficiently identifying the filters (and hence the buckets) passed by a query.
We do so through the so-called \emph{tensoring techinque}, obtaining a large family of composite filters from a much smaller family of base filters.

\section{Efficient Filter Enumeration via Tensoring}\label{se:tensoring}
As we discussed at the end of \Cref{se:eas_anns}, 
the bottleneck of the query algorithm is the evaluation of the $m$ filters, which in general can dominate the average number of candidates $O(n^\rho)$.

\subsection{Tensoring}
In \cite{christiani2017framework}, the authors propose a technique called \emph{tensoring} to reduce the number of filters that need to be evaluated.
The idea is to \emph{simulate} $m$ filters from a smaller set of $m'$ filters, where $m' \ll m$.
In particular, we can construct $m = \binom{m'}{\eta}$ filters by taking all possible intersections of $\eta$ filters from the set of $m'$ filters.
More precisely, to solve the problem of the evaluation of the $m$ filters, the authors combine the tensoring technique with another technique called \emph{powering} (also known as \emph{AND-construction} in the \lsh\ literature), which consists in taking the intersection of $\kappa$ independent filters to obtain a new filter with improved locality-sensitive properties.

The combination of powering and tensoring allows to reduce the number of filters to evaluate for generic $(\gamma, c\gamma, q_1, q_2)$-\lsf\ families, for any values of $q_1$ and $q_2$.
In our case, however, the threshold $\tau$ already acts as an amplifier, similar to powering, so we do not need to use the powering technique.
This greatly simplifies both the data structure and the analysis, while still allowing to reduce the number of filters to evaluate.

We start defining the tensoring technique and its properties, and then we show how to apply it to our data structure.

\begin{definition}[Tensoring]\label{def:tensoring}
Let $\mathcal{F}$ be a $(\gamma, c\gamma, q_1, q_2)$-\lsf\ family, and let $m \in \mathbb{N}$ and $\eta \in [m]$.
Let $F$ be a collection of $m$ independent filters $h_1, \dots, h_m$ sampled uniformly at random from $\mathcal{F}$.
The tensoring of $F$ is the collection $F^{\otimes \eta}$ of $M = \binom{m}{\eta}$ filters, such that, for every $I \in \binom{[m]}{\eta}$ we define
\[
H_I(\bx) = \prod_{i \in I} h_i(\bx), \; \forall \bx \in S^{d-1}.
\]
Let $I_\bx = \{I \in \binom{[m]}{\eta} \mid H_I(\bx) = 1\}$, that is the set of tensor filters that $\bx$ passes.
The tensoring $F^{\otimes \eta}$ of $F$ has the following properties:
\begin{enumerate}[i)]
    \item if we consider two near points $\bx,\by \in S^{d-1}$ such that $\angle(\bx,\by) \leq \gamma$, and we set $m = \left\lceil \frac{\eta}{p_1} \right\rceil$, then $\Prob{}{\vert I_\bx \cap I_\by \vert \geq 1} \geq \frac{1}{2}$.
    \item if we consider two far points $\bx,\by \in S^{d-1}$ such that $\angle(\bx,\by) > c\gamma$, in expectation $\vert I_\bx \cap I_\by \vert \leq p_2^\eta M$.
    \item for every point $\bx \in S^{d-1}$, in expectation $\vert I_\bx \vert = p_0^\eta M$.
    \item the cost of computing $I_\bx$ in expectation is $O(m)$ plus the number of numerated $\eta$-subsets $I \in I_\bx$.
\end{enumerate}

\end{definition}
\begin{proof}
The last three properties are trivial.
For the first property, we define the binary random variable $X_i$ that is $1$ if $h_i(\bx) = h_i(\by) = 1$, and $0$ otherwise, for every $i = 1, \dots, m$.
We note that $p_{\bx,\by} = \Prob{}{X_i = 1} \geq p_1$.
Let us define the random variable $X = \sum_{i=1}^{m} X_i$ that indicates the number of filters that both $\bx$ and $\by$ passes, and observe that $X \sim \text{Bin}(m, p_{\bx,\by})$, with mean $\mu = p_{\bx,\by}m \geq p_1m \geq \eta$.
Since $\eta$ is an integer, we have that $\eta \leq \left\lfloor \mu \right\rfloor$.
Since the median $M_X$ of $X$ belongs to
$\{\lfloor\mu\rfloor,\lceil\mu\rceil\}$,\footnote{Recall that a median of $\text{Bin}(m,p)$ belongs to
$\{\lfloor mp\rfloor,\lceil mp\rceil\}$.} then we have $\eta \leq M_X$, and hence
\[
\Prob{}{\vert I_\bx \cap I_\by \vert \geq 1}
= \Prob{}{X \geq \eta} \geq \Prob{}{X \geq M_X} \geq \frac{1}{2}.
\]
\end{proof}

\subsection{The Tensorised Data Structure}
We are now ready to show how to modify the spherical-\lsf\ data structure using the tensoring technique in order to guarantee complexity $O(n^\rho)$.
As in \Cref{alg:lsf}, we sample $m$ independent Gaussian filters with parameter $\tau$ and compute $\sig(\bx)$.
After that, for each point $\bx \in \dset$ we consider all the $\eta$-tuple $I$ of indices of $\sig(\bx)$, and then we add the point $\bx$ into the bucket $B_I$.
Finally, we return all the non-empty buckets $B_I$.
Note that while in \Cref{alg:lsf_indexing} each point $\bx$ was inserted into $\vert \sig(x) \vert$ different buckets, in \Cref{alg:indexing_tensoring} each point $\bx$ is now inserted into $\binom{\vert \sig(\bx) \vert}{\eta}$ buckets.
Concerning the query, as for the indexing, we simply probe all the buckets $B_I$, for every $I \subseteq \sig(\bq)$ with $\vert I \vert = \eta$.
All pseudo-codes of the data structure are reported in \Cref{alg:lsf_tensoring,alg:indexing_tensoring,alg:query_tensoring}.

\noindent
\begin{minipage}[t]{0.46\textwidth}
    \begin{algorithm}[H]
      \caption{Tensorised Spherical-\lsf-Filter}\label{alg:lsf_tensoring}
      \textbf{Input:} a vector $\bx\in S^{d-1}$, a threshold $\tau \ge 0$, $m$ vectors $\theta_1, \dots, \theta_m \sim \norm(\bzero, I/d)$\\
      \textbf{Output} a set of indices $\sig(\bx)\subseteq [m]$
      \begin{algorithmic}[1]
        \State $\sig(\bx) = \{ i \in [m] \mid \theta_i^\top\bx \geq \tau \}$\;
        \State \Return $\sig(\bx)$\;
      \end{algorithmic}
    \end{algorithm}
\end{minipage}
\hfill 
\begin{minipage}[t]{0.46\textwidth}
    \begin{algorithm}[H] 
    \caption{Spherical-\lsf-Index}\label{alg:indexing_tensoring}
        \textbf{Input:} the set $\dset \subseteq S^{d-1}$, an integer $\eta$.\\
        \textbf{Output} a set of buckets $\mathcal{B}$
        \begin{algorithmic}[1]
            \For{$\bx \in \dset$}
                \State $I_\bx \gets \{I \subseteq \sig(\bx) \mid \vert I \vert = \eta\}$
                \For{$I \in I_\bx$}
                    $\buck_I \gets \buck_I \cup \{\bx\}$
                \EndFor
            \EndFor
            \State \Return $\mathcal{B} = \{B_I \mid I \in \binom{[m]}{\eta} \land B_I \neq \emptyset\}$
        \end{algorithmic}
    \end{algorithm}
\end{minipage}

\begin{center}
\begin{minipage}{0.6\textwidth}
    \begin{algorithm}[H]
    \caption{Spherical-\lsf-Query}\label{alg:query_tensoring}
        \textbf{Input:} a query $\bq$, a set of buckets $\mathcal{B}$, an integer $\eta$\\
        \textbf{Output:} near neighbor for $\bq$ or \texttt{not found}
        \begin{algorithmic}[1]
            \State $I_\bq \gets \{I \subseteq \sig(\bq) \mid \vert I \vert = \eta\}$
            \For{$I \in I_\bq$}
                \For{$\bx\in\buck_I$}
                    \If{$\angle(\bx, \bq)\le c\gamma$}
                        \State \Return $\bx$
                    \EndIf
                \EndFor
            \EndFor
            \State \Return \texttt{not found}
        \end{algorithmic}
    \end{algorithm}
\end{minipage}
\end{center}
\vspace{1em}

\begin{theorem}\label{thm:main_thm_tensoring}
Let $n$ be sufficiently large, $\eta = \left\lceil \frac{1}{\sin^2 \frac{\gamma}{2}} \right\rceil$, and assume $\tau \geq \sqrt{\frac{2\log n}{\eta d \tan^2 \frac{c\gamma}{2}}} + o(1)$ and $m = \left\lceil \frac{\eta}{p_1} \right\rceil$, where $p_1$ is the joint probability that two points at angular distance $\gamma$ pass the same filter with threshold parameter $\tau$.
Then, for every query $\bq \in S^{d-1}$, Spherical-\lsf\ solves Angular $(\gamma, c\gamma)$-ANN by correctly answering with probability at least $\frac{1}{2}$ and performing $O(n^{\rho + o(1)})$ angular-distance computations in expectation.  
The expected size of the data structure (other than the space required to store the data points) is $O(n^{1+\rho + o(1)})$.
\end{theorem}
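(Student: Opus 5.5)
We will follow the template of the proof of \Cref{thm:main_thm}, substituting the tensored family $F^{\otimes\eta}$ for the basic filters and using the properties listed in \Cref{def:tensoring}. Write $p_0,p_1,p_2$ as in that proof, with $q_1=p_1/p_0$ and $q_2=p_2/p_0$. The threshold is now chosen so that $q_2^\eta\le 1/n$, and $\rho=\log(1/q_1)/\log(1/q_2)$ is as in \Cref{def:rho}.

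\textbf{Correctness.} Take $m=\lceil\eta/p_1\rceil$. Property (i) of \Cref{def:tensoring} then says that a near point $\bx$ and the query $\bq$ share at least one tensor bucket with probability at least $1/2$. In that case \Cref{alg:query_tensoring} scans $B_I$ for some $I\in I_\bx\cap I_\bq$ and returns a point within angle $c\gamma$. So correctness is immediate.

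\textbf{Candidates.} By property (ii), the expected number of far points met in probed buckets is at most
\[
n\,p_2^\eta M\le n\,p_2^\eta\frac{m^\eta}{\eta!}\approx n\,\frac{p_2^\eta\eta^\eta}{p_1^\eta\eta!}\le n\Big(\frac{e\,q_2}{q_1}\Big)^\eta .
\]
Since $\eta$ is a constant depending only on $\gamma$, the factor $e^\eta$ is $O(1)$. The ceiling in $m$ changes $m^\eta$ only by a constant factor, because $p_1\eta\le 1$.

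\textbf{Space.} By property (iii), the expected total number of bucket entries is $n\,p_0^\eta M\le n(e/q_1)^\eta$.

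\textbf{Query overhead.} By property (iv), computing $I_\bq$ costs $O(m)$ plus $|I_\bq|$. In expectation $|I_\bq|=p_0^\eta M\le (e/q_1)^\eta$.

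It remains to show $q_1^{-\eta}\le n^{\rho+o(1)}$ and $n\,q_2^\eta\le 1$, and to control $m$. The threshold assumption enters through \Cref{le:cond_prob} and \Cref{le:normal_tail_1}. First, $q_2\le 2(1-\Phi(t\tan\frac{c\gamma}{2}))\le e^{-t^2\tan^2(c\gamma/2)/2}$ up to polynomial factors. With $t^2\ge 2\log n/(\eta\tan^2\frac{c\gamma}{2})$, plus the $o(1)$ slack that absorbs the $\log t$ terms, this gives $q_2^\eta\le 1/n$. Then $q_1^{\eta}=q_2^{\rho\eta}$, and the algebra of \Cref{thm:main_thm} gives $q_1^{-\eta}\approx n^{\rho}$ and candidates $O(n^{\rho})$. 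The polynomial-in-$t$ corrections become $n^{o(1)}$ because $t=\Theta(\sqrt{\log n})$.

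\textbf{Main obstacle: bounding $m$.} This is where the choice $\eta=\lceil 1/\sin^2\frac{\gamma}{2}\rceil$ must be used. We have
\[
m\approx \frac{\eta}{p_0q_1},\qquad p_0=1-\Phi(t)\approx e^{-t^2/2},\qquad q_1\approx e^{-t^2\tan^2(\gamma/2)/2},
\]
so $m\approx e^{t^2(1+\tan^2(\gamma/2))/2}=e^{t^2/(2\cos^2(\gamma/2))}$ up to polynomial factors. We need this to be $O(n^{\rho+o(1)})$, i.e., comparable to $q_1^{-\eta}=e^{\eta t^2\tan^2(\gamma/2)/2}$. That holds when $\eta\tan^2\frac{\gamma}{2}\ge 1/\cos^2\frac{\gamma}{2}$, which is $\eta\ge 1/\sin^2\frac{\gamma}{2}$ and is exactly the stated choice of $\eta$. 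I would first check this inequality carefully, including the polynomial prefactors from \Cref{le:normal_tail_1}, which contribute only $n^{o(1)}$. I would also use the upper bound on $\tau$ implicit in the scaling $t=\Theta(\sqrt{\log n})$; if $\tau$ is much larger, I would argue that one takes $\tau$ at equality.

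Finally, the space bound comes from combining the $n(e/q_1)^\eta=n^{1+\rho+o(1)}$ bucket entries with the $O(m\,d)$ storage for the projectors, which is also $n^{\rho+o(1)}$ by the bound on $m$.
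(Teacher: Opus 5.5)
Your proposal is correct and follows essentially the same route as the paper. You get correctness from property (i) of \Cref{def:tensoring}, and the candidate, probe and space bounds from (ii)--(iii) together with $\binom{m}{\eta}\le (em/\eta)^\eta$ and the threshold forcing $q_2^\eta\le 1/n$; the choice $\eta\ge 1/\sin^2\frac{\gamma}{2}$ is then used exactly as in the paper, to show that $m\approx 1/(p_0q_1)$ is at most $n^{\rho+o(1)}$. Your explicit remark that $\tau$ must be taken at (near) equality is genuinely needed: the paper relies on it tacitly when it turns $q_2^\eta\le n^{-1}$ into an $O(n^\rho)$ bound on $\vert I_\bq\vert$, and again when it evaluates $\log\frac{1}{p_1}$ at the chosen $t$.
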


\begin{proof}
By \Cref{def:tensoring}, we already know that with probability at least $\frac{1}{2}$ if there exists a point $\bx \in \dset$ such that $\angle(\bq,\bx) \leq \gamma$, then the Spherical-\lsf-Query of \Cref{alg:query_tensoring} will return it.

As for \Cref{thm:main_thm}, in order to bound the query time of the data structure, we focus on the worst case in which a unique point $\bx \in \dset$ exists, such that $\angle(\bq,\bx) \leq \gamma$, while $\angle(\bq,\by) \geq c\gamma$ for every $\by \in \dset \setminus \{ \bx \}$.\\

Query time is composed of three terms: (i) the number of candidates encountered when scanning the buckets, (ii) the number of buckets to probe, and (iii) the time required to calculate the
set of buckets $I_\bq$ to probe.

In the following we show that each of these terms is $O(n^{\rho + o(1)})$ in expectation, and that, therefore, the query time requires $O(n^{\rho + o(1)})$ angular distance computations in expectation.

\subparagraph*{Candidate estimation.}
Since $m = \left\lceil \frac{\eta}{p_1} \right\rceil$, by \Cref{def:tensoring} we know that in expectation the Spherical-\lsf-Query examines at most
\[
nMp_2^\eta = n \binom{m}{\eta}p_2^\eta \leq n \left( \frac{em}{\eta}\right)^\eta p_2^\eta = O(n) \left(\frac{p_2}{p_1}\right)^{\eta} = O(n) \left(\frac{q_2}{q_1}\right)^{\eta},
\]
where the third inequality holds since $e^\eta = O(1)$ (since $\eta$ is a constant), and the last inequality holds since, by definition, $p_1 = p_0q_1$ and $p_2 = p_0q_2$.
By definition of $\rho$, we have $q_1 = q_2^\rho$, hence $\frac{q_2}{q_1} = q_2^{1-\rho}$.
Now, we aim at a choice of $\tau$ such that $q_2^{(1-\rho)\eta} \leq n^{\rho-1}$.
By taking the logarithms it holds when
\begin{align*}
    (1-\rho)\eta \log{q_2} &\leq (\rho - 1) \log{n}
    \;\;\; \implies \;\;\; (1-\rho)\eta \log{\frac{1}{q_2}} \geq (1- \rho) \log{n}\\
    \implies \log{\frac{1}{q_2}} &\geq \frac{\log{n}}{\eta}.
\end{align*}
By \Cref{le:cond_prob} and \Cref{le:normal_tail_1}, we know that, for sufficiently large $d$, $\log{\frac{1}{q_2}} = \frac{t^2}{2} \tan^2\frac{c\gamma}{2} + O(\log{t})$, therefore for constant $\eta$
\begin{equation}\label{eq:tau_choice_tensoring}
t \geq \sqrt{\frac{2\log{n}}{\eta \tan^2\frac{c\gamma}{2}} + O(\log{t})} \implies \tau \geq \sqrt{\frac{2\log{n}}{\eta d \tan^2\frac{c\gamma}{2}}} + o_d(1).
\end{equation}
Therefore, we have that the expected number of candidates that Spherical-\lsf-Query examines is at most 
\[
O(n)q_2^{(1-\rho)\eta} \leq O(n) n^{\rho - 1} = O(n^{\rho}).
\]

\subparagraph{Bucket probes.}
By \Cref{def:tensoring}, the expected number of buckets $\buck_I \in \mathcal{B}$ that Spherical-\lsf-Query will probe is
\[
\Expec{}{\vert I_\bq \vert} = Mp_0^\eta \leq \left( \frac{em}{\eta} \right)^\eta p_0^\eta = O(1) \left(\frac{p_0}{p_1} \right)^\eta = O(1) \left(\frac{1}{q_1} \right)^\eta = O(1) q_2^{-\eta\rho}.
\]
Since we know that $q_2^\eta \leq n^{-1}$, we obtain the desired bound $\Expec{}{\vert I_\bq \vert} = O(n^\rho)$.

\subparagraph{Generation of $I_\bq$.}
By \Cref{def:tensoring} we know that $I_\bq$ can be computed in time $O(dm + \vert I_\bq \vert)$.
Since we already know that $\vert I_\bq \vert$ in expectation is $O(n^{\rho})$, it remains to show that $m = O(n^{\rho + o(1)})$.
By definition $m = \left\lceil \frac{\eta}{p_1} \right\rceil$.
Observe that, since $\eta$ is a constant, the dominant term is $\frac{1}{p_1}$.
By taking the logarithm we have
\begin{equation}\label{eq:number_buckets_tensoring}
    \log\frac{1}{p_1} = \log\frac{1}{q_1p_0} = \log\frac{1}{p_0} + \log\frac{1}{q_1} = \frac{t^2}{2} + \frac{t^2}{2}\tan^2\frac{\gamma}{2} + O(\log{t}) = \frac{1+\tan^2\frac{\gamma}{2}}{\eta \tan^2\frac{c\gamma}{2}}\log{n} + o(\log{n}),
\end{equation}
where the last inequality follows from the choice of $t$ from \Cref{eq:tau_choice_tensoring}.
By definition of $\eta$ we have
\[
\eta \geq \frac{1}{\sin^2\frac{\gamma}{2}} = \frac{\cos^2\frac{\gamma}{2} + \sin^2\frac{\gamma}{2}}{\sin^2\frac{\gamma}{2}} = \frac{1}{\tan^2\frac{\gamma}{2}} + 1 = \frac{\tan^2\frac{\gamma}{2} + 1}{\tan^2\frac{\gamma}{2}}.
\]
By substituting $\eta$ in \Cref{eq:number_buckets_tensoring} we obtain the bound
\[
\log\frac{1}{p_1} \leq \frac{\tan^2\frac{\gamma}{2}}{\tan^2\frac{c\gamma}{2}}\log{n} (1+o(1)).
\]
Finally, we observe that 
\[
\rho = \frac{\log\frac{1}{q_1}}{\log\frac{1}{q_2}} = \frac{\tan^2\frac{\gamma}{2}}{\tan^2\frac{c\gamma}{2}} + o(1).
\]
This implies that $\log\frac{1}{p_1} \leq (\rho + o(1))\log{n}$, that is $m = O(n^{\rho + o(1)})$.\\

Concerning the size of the data structure, the space complexity is dominated by the storage of the buckets $\mathcal{B}$.
Since each point $\bx \in \dset$ is stored in $\vert I_\bx \vert$ buckets,  the expected total number of entries in the data structure is $ \sum_{\bx \in \dset} \Expec{}{\vert I_\bx \vert} = n M p_0^\eta$.
From before we already know that $Mp_0^\eta = O(n^\rho)$.
Therefore, the data structure stores $O(n^{1+\rho})$ points in expectation.
\end{proof}
\section{Beyond Symmetric Filters: Space-Time Tradeoffs}
\label{sec:asymmetric_lsf}

The symmetric filters studied in the previous sections provide a balanced solution, in which the same buckets are associated with queries and data points.
The general \lsf\ framework removes this restriction by representing a random filter as a pair $(Q,U)$, sampled from a distribution over pairs of subsets of the underlying space.
Here, $Q$ is the \emph{query region}, determining whether a bucket is inspected by a query, while $U$ is the \emph{update region}, determining whether a data point is stored in that bucket.

More precisely, a filter family is described by parameters $p_1,p_2,p_q,p_u$ such that
\begin{itemize}
    \item $\Prob{(Q,U)}{\bq \in Q,\ \bx \in U} \geq p_1$ if $\bq$ and $\bx$ are nearby points;
    \item $\Prob{(Q,U)}{\bq \in Q,\ \bx \in U} \leq p_2$ if $\bq$ and $\bx$ are distant points;
    \item $\Prob{(Q,U)}{\bx \in Q} \leq p_q$ and $\Prob{(Q,U)}{\bx \in U} \leq p_u$ for every point $\bx$.
\end{itemize}
The use of different query and update regions gives rise to the two exponents
\[
\rho_q =\frac{\log(p_q/p_1)}{\log(p_q/p_2)},\qquad
\rho_u = \frac{\log(p_u/p_1)}{\log(p_q/p_2)}.
\]
The first exponent governs the query complexity, while the second governs the update complexity and the additional space usage.
Consequently, asymmetric filters make it possible to redistribute the computational cost between query time and space.
In total, the tradeoff is $n^{\rho_q + o(1)}$ distance computations for the query time, and $n^{1 + \rho_u + o(1)}$ additional space.

In the spherical setting, this asymmetry admits a direct geometric interpretation.
Given a common Gaussian projector $\theta_i$, one may define two spherical caps with different thresholds:
\[
Q_i = \left\{ \bq \in S^{d-1} \mid \theta_i^\top \bq \geq \tau_q \right\},\qquad
U_i = \left\{ \bx \in S^{d-1} \mid \theta_i^\top \bx \geq \tau_u \right\}.
\]
The threshold $\tau_q$ controls how frequently a query probes a bucket, whereas $\tau_u$ controls how frequently a data point is stored in a bucket.
Varying these thresholds independently therefore yields different query-space tradeoffs\footnote{Note that, if $\tau_q \leq \tau_u$ then we have that $U_i \subseteq Q_i$, and vice versa.}.

The symmetric Gaussian filters considered throughout this paper correspond to the special choice $\tau_q = \tau_u = \tau$.
In this case, $Q_i = U_i$, the two marginal probabilities coincide, $p_q = p_u = p_0$, and the two exponents collapse to the single balanced parameter
\[
\rho_q = \rho_u = \frac{\log(p_0/p_1)}{\log(p_0/p_2)} = \frac{\log(1/q_1)}{\log(1/q_2)} = \rho.
\]
Our restriction to this balanced setting is deliberate.
It allows us to explain the geometry of spherical filters, the role
of the threshold $\tau$, the optimal exponent $\rho$, and the tensoring construction without carrying two sets of parameters throughout the analysis.
In conclusion, asymmetric filters preserve the same underlying mechanism, but simply use different query and update regions to redistribute computational cost between query time and space.

\subparagraph{Beyond data-independent filters.}
\label{subse:beyond_data_independent}
The space-time tradeoff described thus far remains data-independent: the distribution of the filters is fixed without inspecting the input dataset.
A separate and more powerful generalization allows the filtering or partitioning strategy to depend on the dataset itself.
In this data-dependent setting, stronger exponents become possible.

For Euclidean ANN, Andoni~et~al.~\cite{andoni2017optimal} established a tight space-time tradeoff for all those exponents $\rho_q \geq 0, \rho_u \geq 0$ satisfying
\[
c^2\sqrt{\rho_q} + (c^2-1)\sqrt{\rho_u} = \sqrt{2c^2-1},
\]
showing the existence of a data structure with query time $dn^{\rho_q + o(1)}$ and an additional space footprint of $n^{1+\rho_u + o(1)}$.

At the balanced point $\rho_q = \rho_u = \rho$, the tradeoff yields the exponent
\[
\rho = \frac{1}{2c^2-1}.
\]
This is strictly smaller than the balanced data-independent exponent $1/c^2$.
It is important to note that there is no contradiction with the lower bound discussed in \Cref{subse:slsf_optimal}: the two bounds concern different models.
The bound $1/c^2$ applies to the data-independent symmetric setting considered in the main body of this paper, whereas the stronger exponent above is achieved by allowing the data structure to adapt to the input dataset.


\bibliography{biblio}

\newpage
\appendix

\section{Deferred Proofs}



\subsection{Proof of \Cref{fa:standard_ratio}}\label{se:apx_prel}

\begin{proof}
    Assume $\eta_2\ge\eta_1\ge 0$. We have:
    \[
        \frac{1}{\sqrt{2\pi}}\int_{\eta_2}^{+\infty}e^{-\frac{x^2}{2}}dx = \frac{1}{\sqrt{2\pi}}\int_{\eta_1}^{+\infty}e^{-\frac{(z + \eta_2 - \eta_1)^2}{2}}dz,
    \]
    where we performed the change of variable $z = x + \eta_1 - \eta_2$.

    We further have $(z + \eta_2 - \eta_1)^2 = z^2 + \eta_2^2 + 2(\eta_2 - \eta_1)z 
    - 2\eta_1\eta_2 + \eta_1^2$. Moreover, $2(\eta_2 - \eta_1)z - 2\eta_1\eta_2 + \eta_1^2\ge -\eta_1^2$ whenever $z\ge\eta_1$, since $\eta_2 - \eta_1\ge 0$. As a consequence, $(z + \eta_2 - \eta_1)^2\ge z^2 + \eta_2^2 - \eta_1^2$ for $z\in [\eta_1, +\infty]$, whence:
    \[
        \frac{1}{\sqrt{2\pi}}\int_{\eta_1}^{+\infty}e^{-\frac{(z + \eta_2 - \eta_1)^2}{2}}dz\le\frac{e^{-\frac{\eta_2^2 - \eta_1^2}{2}}}{\sqrt{2\pi}}\int_{\eta_1}^{+\infty}e^{-\frac{z^2}{2}}dz,
    \]
    which proves the claim.
\end{proof}

\subsection{Proof of \Cref{le:cond_prob}}\label{apx:proof_cond_prob}

\begin{proof}
    We set $X = \sqrt{d}\theta^\top\bx$ and $Y = \sqrt{d}\theta^\top\by$ in the remainder of this proof.\footnote{It may be interesting to note that $(X, Y)$ is a bivariate normal variable with covariance $\cos\alpha$.} Due to rotational invariance of the (multivariate) standard normal distribution, we can assume without loss of generality that $\be_1 = \by$, while $\be_2$ is orthogonal to $\be_1$ (and thus $\by$), belongs to the plane spanned by $\bx$ and $\by$ and forms an acute angle with \(\bx\). Finally, setting $Z = \sqrt{d}\theta_2$, we have $Y = \sqrt{d}\theta_1$ and $X = Y\cos\alpha + Z\sin\alpha$, with $Y, Z\sim\norm(0, 1)$ and independent. Next, we have:
    \begin{equation}\label{eq:tail_3}
    \begin{aligned}
        & \Prob{}{h(\bx) = 1\wedge h(\by) = 1} = \Prob{}{\theta^\top\bx\ge\tau\wedge\theta^\top\by\ge\tau} = \Prob{}{X\ge t\wedge Y\ge t}\\
        & = \Prob{}{Y\cos\alpha + Z\sin\alpha\ge t\wedge Y\ge t}.
    \end{aligned}
    \end{equation}
    We next continue with $\Prob{}{Y\cos\alpha + Z\sin\alpha\ge t\wedge Y\ge t}$, since $\Prob{}{h(\by) = 1} = \Prob{}{Y\ge t} = 1 - \Phi(t)$. We have:
    \begin{equation}\label{eq:tail_1}
    \begin{aligned}
        &\Prob{}{Y\cos\alpha + Z\sin\alpha\ge t\wedge Y\ge t} = \int_{t}^{\infty}\Prob{}{Y\cos\alpha + Z\sin\alpha \geq t \mid Y = v} \phi(v) \, dv\\
        & = \int_{t}^{\infty}\Prob{}{Z \geq \frac{t - Y\cos\alpha}{\sin\alpha}\mid Y = v} \phi(v) \, dv = \int_{t}^{\infty}\Prob{}{Z \geq \frac{t - v\cos\alpha}{\sin\alpha}} \phi(v) \, dv,
    \end{aligned}
    \end{equation}
    where we recall that
    $$
    \Prob{}{Z \geq \frac{t - v\cos\alpha}{\sin\alpha}} = 1 - \Phi\left(\frac{t - v\cos\alpha}{\sin\alpha}\right).
    \footnote{We note that \eqref{eq:tail_1} can be equivalently obtained by observing that the bivariate \((X, Y)\) follows a normal distribution with covariance matrix $\Sigma_{XY} = \begin{pmatrix}
                        1          & \cos\alpha \\
                        \cos\alpha & 1
                    \end{pmatrix}$.}
    $$
    Next, the proof proceeds by considering the cases $\alpha \in [0,\frac{\pi}{2}]$ and $\alpha \in [\frac{\pi}{2}, \pi)$ separately, since the events $(X\ge t)$ and $(Y\ge t)$ are positively (resp. negatively) correlated in the two cases. This makes the upper bound harder to prove in the first case and the lower bound harder to prove in the second.

    \subparagraph*{Case 1: $\alpha \in [0,\frac{\pi}{2}]$.}
    
    We begin by proving the lower bound, which is simpler in this case and can be derived directly from \eqref{eq:tail_3}. In particular:
    \begin{equation*}
    \begin{aligned}
        & \Prob{}{Y\cos\alpha + Z\sin\alpha\ge t\wedge Y\ge t} = \Prob{}{Z\sin\alpha\ge t - Y\cos\alpha\wedge Y\ge t}\\
        & \ge\Prob{}{Z\sin\alpha\ge t(1 - \cos\alpha)\wedge Y\ge t},
    \end{aligned}
    \end{equation*}
    since the event $(Z\sin\alpha\ge t(1 - \cos\alpha)\wedge Y\ge t)$ implies the event $(Z\sin\alpha\ge t - Y\cos\alpha\wedge Y\ge t)$.
    We therefore have:
    \begin{equation*}
    \begin{aligned}
        & \Prob{}{h_i(\bx) = 1 \mid h_i(\by) = 1} = \frac{\Prob{}{h_i(\bx) = 1 \wedge h_i(\by) = 1}}{\Prob{}{h_i(\by) = 1}}
        \ge\frac{\Prob{}{Z\sin\alpha\ge t(1 - \cos\alpha)\wedge Y\ge t}}{\Prob{}{Y\ge t}}\\
        & =\Prob{}{Z\sin\alpha\ge t(1 - \cos\alpha)} = \Prob{}{Z\ge\ t\frac{1 - \cos\alpha}{\sin\alpha}}
        =\Prob{}{Z\ge t\tan\frac{\alpha}{2}} = 1 - \Phi\left(t\tan\frac{\alpha}{2}\right),
    \end{aligned}
    \end{equation*}
    where the third equality follows since the events $(Z\sin\alpha\ge t(1 - \cos\alpha))$ and $(Y\ge t)$ are independent, the fifth equality follows since $(1 - \cos\alpha)/\sin\alpha = \tan\frac{\alpha}{2}$, while the sixth follows since $Z\sim\norm(0, 1)$.

    We next prove the upper bound, which requires more work.
    We introduce $a = \cos\frac{\alpha}{2}$ and $b = \sin\frac{\alpha}{2}$. Since $(X, Y)$ is a centered Gaussian vector with covariance matrix
    \[
        \begin{pmatrix}
            1 & \cos\alpha \\
            \cos\alpha & 1
        \end{pmatrix}
        =
        \begin{pmatrix}
            1 & a^2 - b^2 \\
            a^2 - b^2 & 1
        \end{pmatrix},
    \]
    it has the same distribution as $(aU + bV, aU - bV)$, where $U, V\sim\norm(0, 1)$ are independent. Therefore:
    \begin{equation*}
    \begin{aligned}
        & \Prob{}{\hash(\bx) = 1\wedge\hash(\by) = 1} = \Prob{}{aU + bV\ge t \wedge aU - bV\ge t}\\
        & = \Prob{}{aU\ge t + b|V|} = 2\int_0^\infty \Prob{}{U\ge \frac{t + bv}{a}}\phi(v)\,dv.
    \end{aligned}
    \end{equation*}
    Next, for every $\eta\ge t\ge 0$ we have $1 - \Phi(\eta)\le e^{-\frac{\eta^2 - t^2}{2}}(1 - \Phi(t))$ from \Cref{fa:standard_ratio}.
    Applying this inequality with $\eta = \frac{t + bv}{a}$, dividing by $\Prob{}{Y\ge t} = 1 - \Phi(t)$ and recalling that $U$ and $V$ are independent, we obtain:
    \begin{equation*}
    \begin{aligned}
        & \Prob{}{\hash(\bx) = 1 \mid \hash(\by) = 1} \le 2\int_0^\infty e^{-\frac{\left(\frac{t + bv}{a}\right)^2 - t^2}{2}}\phi(v)\,dv\\
        & = 2\int_0^\infty \frac{1}{\sqrt{2\pi}}e^{-\frac{\left(\frac{t + bv}{a}\right)^2 - t^2 + v^2}{2}}\,dv
        = 2\int_0^\infty \frac{1}{\sqrt{2\pi}}e^{-\frac{(v + tb)^2}{2a^2}}\,dv\\
        & = 2a\int_{tb/a}^{\infty}\phi(z)\,dz = 2a\left(1 - \Phi\left(t\tan\frac{\alpha}{2}\right)\right)
        \le 2\left(1 - \Phi\left(t\tan\frac{\alpha}{2}\right)\right),
    \end{aligned}
    \end{equation*}
    where the third equality follows from $a^2 + b^2 = 1$ and the fourth from the change of variable $z = \frac{v + tb}{a}$.

    \subparagraph*{Case 2: $\alpha \in [\frac{\pi}{2},\pi)$.}
    We begin with the lower bound, which is harder to obtain, since $X$ and $Y$ are negatively correlated when $\alpha > \frac{\pi}{2}$, since $\cos\alpha < 0$. We use again \eqref{eq:tail_1} to write:
    \[
        \Prob{}{h(\bx) = 1\wedge h(\by) = 1} = \int_{t}^{\infty}\Prob{}{Z \geq \frac{t - v\cos\alpha}{\sin\alpha}} \phi(v) \, dv.
    \]
    Finding a tight lower bound to the probability above requires some care, due to the negative correlation between $X$ and $Y$ when $\alpha > \frac{\pi}{2}$ and, therefore, $\cos\alpha < 0$. To begin, we prove that the conditional expectation $\Expec{}{Y - t\mid Y\ge t}$ is $O(\frac{1}{t})$, which allows us to consider $v$ deterministically close to $t$ in the integral above. To this purpose \Cref{fa:expec} implies:
    \begin{align*}
        & \Expec{}{Y - t\mid Y\ge t} = \frac{1}{1 - \Phi(t)}\int_t^{\infty}(v - t)\phi(v)dv = \frac{\phi(t) - t(1 - \Phi(t))}{1 - \Phi(t)}.
    \end{align*}
    But $\phi(t)\le\frac{t^2 + 1}{t}(1 - \Phi(t))$ from \Cref{le:normal_tail_1}, whence $\phi(t) - t(1 - \Phi(t))\le\frac{1 - \Phi(t)}{t}$, so that $\Expec{}{Y - t\mid Y\ge t}\le\frac{1}{t}$.
    An application of Markov's inequality allows us to conclude that:
    \[
        \Prob{}{Y > t + \frac{2}{t} \mid Y\ge t} = \Prob{}{Y - t> \frac{2}{t}\mid Y\ge t}\le\frac{1}{2},
    \]
    yielding:
    \begin{equation}\label{eq:markov}
        \Prob{}{t\le Y\le t + \frac{2}{t}}\ge\frac{1}{2}(1 - \Phi(t)).
    \end{equation}
    We continue with:
    \begin{equation*}
    \begin{aligned}
        & \Prob{}{h(\bx) = 1\wedge h(\by) = 1} \geq \int_t^{t + \frac{2}{t}} \Prob{}{Z \geq \frac{t - v\cos\alpha}{\sin\alpha}}\phi(v)dv\\
        &\geq \Prob{}{Z \geq \frac{t - \left(t + \frac{2}{t}\right)\cos\alpha}{\sin\alpha}}\int_t^{t + \frac{2}{t}}\phi(v)dv= \left( 1 - \Phi\left(\frac{t - \left(t + \frac{2}{t}\right)\cos\alpha}{\sin\alpha}\right) \right)\int_t^{t + \frac{2}{t}}\phi(v)dv \\
        & \ge\frac{1}{2} \left(1 - \Phi\left(\frac{t - \left(t + \frac{2}{t}\right)\cos\alpha}{\sin\alpha}\right)\right)(1 - \Phi(t))\ge\frac{1}{2} \left(1 - \Phi\left(\left(t + \frac{2}{t}\right)\tan\frac{\alpha}{2}\right)\right)(1 - \Phi(t)),
    \end{aligned}
    \end{equation*}
    where the fourth inequality follows from \eqref{eq:markov}, while the last follows since \(1 - \Phi(x)\) is decreasing.
    By \Cref{le:normal_tail_1}, we have that $\frac{\phi(x)}{2x} \leq \frac{x}{x^2+1}\phi(x) \leq 1-\Phi(x) \leq \frac{\phi(x)}{x}$.
    Hence
    \begin{equation*}
    \begin{aligned}
        \frac{1 - \Phi\left( \left(1 + \frac{2}{t^2}\right) t \tan \frac{\alpha}{2} \right)}{1 - \Phi\left( t \tan \frac{\alpha}{2} \right)^{\left(1 + \frac{2}{t^2}\right)^2}}
        &\geq \frac{\phi\left( \left(1 + \frac{2}{t^2}\right) t \tan \frac{\alpha}{2} \right)}{\phi\left(t \tan \frac{\alpha}{2} \right)^{\left(1 + \frac{2}{t^2}\right)^2}} \frac{\left(t \tan \frac{\alpha}{2} \right)^{\left(1 + \frac{2}{t^2}\right)^2-1}}{2 \left(1 + \frac{2}{t^2}\right) }\\
        &= (2\pi)^{\frac{\left(1 + \frac{2}{t^2}\right)^2-1}{2}} \frac{\left(t \tan \frac{\alpha}{2} \right)^{\left(1 + \frac{2}{t^2}\right)^2-1}}{2 \left(1 + \frac{2}{t^2}\right) } \geq \frac{1}{2}.
    \end{aligned}
    \end{equation*}
    Where last inequality holds since $\left(t \tan \frac{\alpha}{2} \right)^{\left(1 + \frac{2}{t^2}\right)^2-1} \geq 1$, and $(2\pi)^{\frac{\left(1 + \frac{2}{t^2}\right)^2-1}{2}} \geq \left(1 + \frac{2}{t^2}\right)$ (we remind the assumption $t \tan\frac{\alpha}{2} \geq 1$, and observe that $t \geq 1$ for $\alpha \in [\frac{\pi}{2}, \pi)$).
    This implies
    \[
    \Prob{}{h(\bx) = 1\mid h(\by) = 1} \geq \frac{1}{4}(1 - \Phi(t\tan\tfrac{\alpha}{2}))^{\left(1 + \frac{2}{t^2}\right)^2}.
    \]

    The upper bound is easier in this case, due to the negative correlation between $X$ and $Y$ when $\frac{\pi}{2}\le\alpha < \pi$. Using again \eqref{eq:tail_1} we have:
    \begin{equation*}
    \begin{aligned}
        & = \Prob{}{\hash(\bx) = 1 \mid \hash(\by) = 1}
        = \frac{\int_{t}^{\infty} \left(1 - \Phi\left(\frac{t - v\cos\alpha}{\sin\alpha}\right)\right)\phi(v)\,dv}{\Prob{}{Y \geq t}}\leq \frac{\left(1 - \Phi\left(\frac{t - t\cos\alpha}{\sin\alpha}\right)\right)\int_{t}^{\infty}\phi(v)\,dv}{\Prob{}{Y \geq t}} \\
        & = 1 - \Phi\left(t\tan\frac{\alpha}{2}\right),
    \end{aligned}
    \end{equation*}
    where the second inequality follows since $1 - \Phi(x)$ is decreasing and $\cos\alpha < 0$ implies $t - t\cos\alpha\le t - v\cos\alpha$, while the last equality follows since $\Prob{}{Y \geq t} = \int_{t}^{\infty}\phi(v)\,dv$ and recalling that $\frac{1 - \cos\alpha}{\sin\alpha} = \tan\frac{\alpha}{2}$.
    This concludes the proof.
\end{proof}

\subsection{Proof of \Cref{le:min_cap_rho}}\label{apx:proof_min_cap_rho}

\begin{proof}
    The proof proceeds by applying \Cref{le:cond_prob} to three cases.

    \subparagraph*{Case 1: $0 < \gamma < c\gamma \le\frac{\pi}{2}$.} In this case, \Cref{le:cond_prob} gives:
    \begin{equation}\label{eq:p_1_and_p2}
        q_1\ge 1-\Phi(t \tan\tfrac{\gamma}{2}) \text{ and }
        q_2\le 2(1-\Phi(t \tan\tfrac{c\gamma}{2})).
    \end{equation}
    Next:
    \[
        \ln\frac{1}{q_1}\le\ln\frac{1}{1 - \Phi(t\tan\frac{\gamma}{2})}
        \le\frac{t^2}{2}\tan^2\frac{\gamma}{2} + \ln\left(2t\tan\frac{\gamma}{2}\right) + \frac{1}{2}\ln(2\pi),
    \]
    where the second inequality follows from \eqref{eq:p_1_and_p2}, the lower bound to $1 - \Phi(t\tan\frac{\gamma}{2})$ granted by \Cref{le:normal_tail_1} and recalling that we assume $t\tan\frac{\gamma}{2}\ge 1$. Moreover:
    \begin{equation*}
    \begin{aligned}
        &\ln\frac{1}{q_2}\ge\ln\frac{1}{2(1-\Phi(t \tan\frac{c\gamma}{2}))}\ge c^2\frac{t^2}{2}\tan^2\frac{\gamma}{2} + \ln\left(t\tan\frac{\gamma}{2}\right) + \frac{1}{2}\ln(2\pi) - \ln 2\\
        &> c^2\frac{t^2}{2}\tan^2\frac{\gamma}{2} + \ln\left(t\tan\frac{\gamma}{2}\right),
    \end{aligned}
    \end{equation*}
    where the first inequality follows from \eqref{eq:p_1_and_p2}, while the second follows from the upper bound to $1 - \Phi(t\tan\frac{c\gamma}{2})$ granted by \Cref{le:normal_tail_1} and from convexity of $\tan\alpha$ for $0\le\alpha < \frac{\pi}{2}$. As a result:
    \[
        \rho = \frac{\ln\frac{1}{q_1}}{\ln\frac{1}{q_2}}\le\frac{1}{c^2} + \frac{\ln\left(2t\tan\frac{\gamma}{2}\right) + \frac{1}{2}\ln(2\pi)}{c^2\frac{t^2}{2}\tan^2\frac{\gamma}{2} + \ln(t\tan\frac{\gamma}{2})} = \frac{1}{c^2} + o_{d}(1),
    \]
    where we recall that $t = \sqrt{d}\tau$, with $\tau$ a fixed parameter (i.e. constant w.r.t. $d$).

    \subparagraph*{Case 2: $0 < \gamma\le\frac{\pi}{2} < c\gamma < \pi$.} This case is similar to the previous one. In particular, the lower bound to $q_1$ is the same, while the upper bound to $q_2$ becomes
    \[
        q_2\le 1-\Phi(t \tan\tfrac{c\gamma}{2})
    \]
    from \Cref{le:cond_prob} (case $\frac{\pi}{2}\le\alpha < \pi$), yielding:
    \[
        \ln\frac{1}{q_2}\ge c^2\frac{t^2}{2}\tan^2\tfrac{\gamma}{2} + \ln(t\tan\tfrac{\gamma}{2}) + \frac{1}{2}\ln(2\pi) > c^2\frac{t^2}{2}\tan^2\frac{\gamma}{2} + \ln(t\tan\tfrac{\gamma}{2}) + \ln2,
    \]
    so that we again have $\rho\le\frac{1}{c^2} + o_{d}(1)$.

    \subparagraph*{Case 3: $\frac{\pi}{2} < \gamma < c\gamma < \pi$.} In this case, we have
    \[
        q_1\ge\frac{1}{4}(1 - \Phi(t\tan\tfrac{\alpha}{2}))^{\left(1 + \frac{2}{t^2}\right)^2}
    \]
    from \Cref{le:cond_prob} (lower bound for $\frac{\pi}{2}\le\alpha < \pi$), while the upper bound on $p_2$ is the same as the previous case, so that we have:
    \[
    \ln\frac{1}{q_1}\le\ln\frac{4}{(1 - \Phi(t\tan\frac{\gamma}{2}))^{\left(1 + \frac{2}{t^2}\right)^2}}\le\left(1 + \frac{2}{t^2}\right)^2\left(\frac{t^2}{2}\tan^2\frac{\gamma}{2} + \ln(2t\tan\frac{\gamma}{2})\right) + \ln 4
    \]
    and
    \[
    \ln\frac{1}{q_2}\ge c^2\frac{t^2}{2}\tan^2\frac{\gamma}{2} + \ln(t\tan\frac{\gamma}{2}) + \ln 2.
    \]
    As a result:
    \begin{equation*}
    \begin{aligned}
        &\rho\le\frac{\left(1 + \frac{2}{t^2}\right)^2\left(\frac{t^2}{2}\tan^2\frac{\gamma}{2} + \ln(2t\tan\frac{\gamma}{2})\right) + \ln 4}{c^2\frac{t^2}{2}\tan^2\frac{\gamma}{2} + \ln(t\tan\frac{\gamma}{2}) + \ln 2} < \left(1 + \frac{2}{t^2}\right)^2\frac{1}{c^2} + \frac{\left(1 + \frac{2}{t^2}\right)^2\ln(2t\tan\frac{\gamma}{2}) + \ln 4}{c^2\frac{t^2}{2}\tan^2\frac{\gamma}{2} + \ln(t\tan\frac{\gamma}{2}) + \ln 2}\\
        &= \frac{1}{c^2} + \left(\frac{4}{t^2} + \frac{4}{t^4}\right)\frac{1}{c^2} + \frac{\left(1 + \frac{2}{t^2}\right)^2\ln(2t\tan\frac{\gamma}{2}) + \ln 4}{c^2\frac{t^2}{2}\tan^2\frac{\gamma}{2} + \ln(t\tan\frac{\gamma}{2}) + \ln 2} = \frac{1}{c^2} + o_{d}(1),
    \end{aligned}
    \end{equation*}
    where we again recall that $t = \sqrt{d}\tau$, with $\tau$ a fixed parameter. This concludes the proof.
\end{proof}

\end{document}